\documentclass[12pt,reqno]{amsart}
\pdfoutput=1 
\usepackage{amsmath,bbm}
\usepackage{latexsym}
\usepackage{amsfonts}
\usepackage{amssymb}
\usepackage{color}
\usepackage[table]{xcolor}

\usepackage{graphicx}
\usepackage{url}
\usepackage{enumerate}
\usepackage{tikz}
\usetikzlibrary{shadings, intersections, calc, plotmarks}
\usepackage{marginnote}
\usepackage{stackrel}
\usepackage{empheq}
\usepackage{tcolorbox}
\usetikzlibrary{arrows.meta,positioning}

\usepackage{geometry}
\usepackage{hyperref}
\hypersetup{
  colorlinks=true,
  linkcolor=blue!90!black,    
  citecolor=green!50!black,   
  urlcolor=blue!70!black,     
  linktoc=all,                
}
\usepackage{cleveref}

\xdefinecolor{tumblue}     {RGB}{0,101,189}
\xdefinecolor{tumgreen}    {RGB}{162,173,  0}
\xdefinecolor{tumred}      {RGB}{229, 52, 24}
\xdefinecolor{tumivory}    {RGB}{218,215,203}
\xdefinecolor{tumorange}   {RGB}{227,114, 34}
\xdefinecolor{tumlightblue}{RGB}{152,198,234}

\newtheorem{proposition}{Proposition}
\newtheorem*{proposition*}{Proposition}

\newtheorem{theorem}{Theorem}
\newtheorem*{theorem*}{Theorem}
\newtheorem{lemma}{Lemma}
\newtheorem{corollary}{Corollary}
\newtheorem*{corollary*}{Corollary}

\newcommand{\R}{\mathbbm{R}}

\newcommand{\C}{\mathbbm{C}}

\newcommand{\N}{\mathbbm{N}}

\newcommand{\Q}{\mathbbm{Q}}

\newcommand{\cT}{\mathcal{T}}

\newcommand{\1}{\mathbbm{1}}

\DeclareMathOperator{\Ran}{ran}
\DeclareMathOperator{\Span}{span}
\def\>{{\rangle}}
\def\<{{\langle}}
\def\Z{{\mathbb{Z}}}

\newcommand{\be}{\begin{equation}}
\newcommand{\ee}{\end{equation}}
\newcommand{\bea}{\begin{eqnarray}}
\newcommand{\eea}{\end{eqnarray}}

\newcommand{\Ex}[1]{\mathbbm{E}\left[#1\right]}

\newcommand{\rank}[1]{\mathrm{rank}\left[#1\right]}

\begin{document}

\title{ How many switches does two-Hamiltonian universality require?}

\author[Garbaravičius]{Matas Garbaravičius$^{1,2}$}
\address{$^1$ Mathematical Institute, University of Oxford, Oxford, United Kingdom}
\address{$^2$ Department of Mathematics, Technical University of Munich, Garching, Germany}

\author[Wolf]{Michael M. Wolf$^{2,3}$}
\address{$^3$ Munich Center for Quantum
Science and Technology (MCQST), Munich, Germany}
\email{m.wolf@tum.de}

\date{\today}

\begin{abstract} Any pair of sufficiently generic Hamiltonians on $\C^d$ generates the full unitary group, i.e., any $W\in U(d)$ can be synthesized to arbitrary precision by alternating between the two evolutions for suitably chosen durations. 
For any given pair of Hamiltonians, the minimal number of switches required to cover the entire group is at least $d$ and  unbounded in general. Despite this, we show 
that it scales linearly in $d$ in expectation. 
Assuming that the two spectra are sufficiently generic, we derive  constructive upper bounds on the minimal switching number as a function of the change-of-basis matrix between the two eigenbases. In particular, if this matrix has a column whose entries all have equal moduli, as for a complex Hadamard matrix, then $9d$ switches suffice to reach every unitary. To obtain these results, we construct a short exact  representation of any unitary as a product of unitaries that are alternately diagonal in two given bases. All bounds are extended to non-generic spectra and bounded evolution times.
\end{abstract}
\keywords{bang-bang control, matrix factorization, order of generation, universality, quantum control, quantum simulation}
\maketitle

\section{Introduction}\label{sec:intro}

Synthesizing a desired unitary evolution by switching between the evolutions generated by a few fixed Hamiltonians is a basic primitive of quantum control and quantum information processing. It underlies the observation that almost any pair
of Hamiltonians, or quantum gates, is universal \cite{lloyd1995almost}, and it is at the heart of quantum simulators \cite{Llo96} and variational quantum algorithms \cite{CAB+21}. 

The most elementary setting of this kind is the one where we can switch between two Hamiltonians $H_1$ and $H_2$ on $\mathbb{C}^d$. This is used for quantum approximate optimization algorithms \cite{FGG14,Llo18,MBZ20}, and is known under the name \emph{bang-bang control} in the context of control theory. If $iH_1$ and $iH_2$ generate the full Lie algebra $\mathfrak{u}(d)$, which is the case for generic pairs, then every $W\in U(d)$ can be written as
\begin{equation}\label{eq:word}
  W=e^{it_1\hat{H}_1}e^{it_2\hat{H}_2}\cdots e^{it_n\hat{H}_n},\qquad t_1,\ldots,t_n\geq 0,
\end{equation}
where the $\hat{H}_j$ alternate between $H_1$ and $H_2$, and $n$ can be
bounded independently of $W$ \cite{DAlessandro2002}. The question we address in this paper is how large $n$ has to be or, equivalently, how many switches $(n-1)$ are required in order to reach every unitary, at least to an arbitrarily good approximation.

\paragraph{\bf Related work and variations on the theme} For $d=2$, the question we address has a geometric reformulation: via
$U(2)/U(1)\simeq SU(2)/\{\pm 1\}\simeq SO(3)$, unitaries up to a global phase
correspond to rotations in $\mathbb{R}^3$, and $e^{itH_j}$ corresponds to a
rotation about an axis that is determined by the eigenbasis of $H_j$. The question
thus becomes: how many rotations about two fixed axes in $\mathbb{R}^3$ are required to
obtain an arbitrary rotation? It was shown in \cite{Low71,Low72, Hamada2014} that the answer is
$\lceil\pi/\theta\rceil+1$ if the two axes enclose an angle
$\theta\in(0,\pi/2]$. In particular, the required number of
switches cannot be bounded uniformly over all universal pairs, not even for $d=2$.

For $d>2$, to the best of our knowledge, no explicit upper bound is known for a general pair of Hamiltonians, although the problem has been discussed in various contexts \cite{JiaCapel2024,LloydEfficientImplementation2019,Lutz2026}. D'Alessandro \cite{DAlessandro2002}  derived a bound for generating a neighborhood of the identity. This bound, however, grows exponentially in the dimension of the group, i.e., exponentially in $d^2$, and its extension to the entire group relies on a compactness argument that does not yield an explicit bound. 

The minimal $n$ that allows us to represent every group element $W$ \emph{exactly} as in Equation~\eqref{eq:word} is called the \emph{order of generation}. Since every factor contributes only a single real parameter, a dimension count shows that it is at least
$d^2$. If it suffices to approximate every unitary arbitrarily well, considerably fewer switches may suffice. In fact, if the
eigenvalues of $H_j$ are generic in the sense that they are linearly independent over $\mathbb{Q}$, then, by Kronecker's theorem, $\{e^{itH_j}\,|\,t\geq 0\}$ is dense in the torus of all unitaries that are diagonal in the eigenbasis of $H_j$. Taking the eigenbasis of
$H_1$ as the computational basis and letting $U$ be the unitary whose columns are the eigenvectors of $H_2$, we denote these two tori by $\cT$ and $\cT_U:=U\cT U^*$. Assuming generic eigenvalues, every unitary can be approximated arbitrarily well with $n$ factors iff $(\cT\cup\cT_U)^n=U(d)$. Our question
therefore becomes: given two orthonormal bases,
how efficiently can any unitary be written as an alternating product of unitaries that are diagonal in one or the other basis? We denote the minimal number of
factors by $L(U)$. Since $\cT$ and $\cT_U$ are $d$-dimensional and share the global phases, a dimension count now gives 
$L(U)\geq d+1$, so that at least $d$ switches are required.

If $U$ is the discrete Fourier transform $F$, a related question has been studied in linear optics \cite{LLM21,girouard2025near}, where diagonal unitaries are implemented by phase masks and Fourier transforms by fixed passive optical elements. For even $d$ it was shown that every unitary can be decomposed into $2d+5$ phase
masks interleaved with $2d+4$ Fourier transforms.
Since $F^2$ is a permutation matrix that squares to the identity, one can write $FDF=(FDF^*)F^2$ and move all factors
$F^2$ to the right end of such a decomposition. Since the remaining fixed
permutation can be absorbed into the target unitary, the number of phase masks
bounds $L(F)$. In particular, $L(F)\le 2d+5$ for even $d$.
 Similar (but incomparable) representations in terms of diagonal matrices and Fourier transforms follow from the Sinkhorn normal form
for unitary matrices \cite{idel2015sinkhorn} and from a factorization into circulant and diagonal matrices \cite{huhtanen2015factoring}.\vspace*{5pt}

\paragraph{\bf Overview of results} 
In this paper, we consider arbitrary pairs of bases with a change-of-basis matrix $U$ and derive explicit upper bounds on $L(U)$. On the level of Hamiltonian universality, as considered in \Cref{eq:word}, this means that we assume sufficiently generic spectra and ask how many switches are required to reach a dense subset of $U(d)$. In \Cref{sec:nongeneric} we will discuss the implications of the results in scenarios where the assumption on the spectra is not met and/or there is a bound imposed on the evolution time.

Our main tool is an exact projective representation of any unitary as a product of at most $d-1$ generalized Householder reflections, in the spirit of \cite{HouseholderUN}, which we choose adaptively so that each of them admits a short representation in terms of elements of $\cT$ and $\cT_U$. This leads to a general upper bound on $L(U)$ in terms of the moduli
of the entries of $U$ (\Cref{thm:main1}). If $U$ has a column whose entries all have equal moduli, as is the case for every complex Hadamard matrix, we obtain
$L(U)<8.2\,d-5$ (\Cref{cor:flat8}). This is only a factor $4.1$ weaker than the bound for the Fourier transform \cite{girouard2025near}, which exploits the full structure of $F$, whereas our bound only relies on the moduli of a single column of $U$. 

In \Cref{thm:graph} we show that $L(U)<\infty$ iff a bipartite graph that encodes the non-zero entries of $U$ is connected, and that in this case $L(U) = {\mathcal O}(d^2/\mu)$, where $\mu$ is a lower bound on the moduli of the relevant entries of $U$.

Finally, for a Haar-random change-of-basis matrix, we show that $\mathbb{E}[L(U)]\leq Cd$ for a
universal constant $C$ (\Cref{thm:HaarLU}), so that, together with the lower bound above, the required number of switches scales linearly in $d$ in expectation.
 
All results are formulated in a slightly more general setting in which the second Hamiltonian may act non-trivially only on a subset $Q$ of its eigenvectors, which covers, for instance, Hamiltonians of low rank. For the linear scaling in \Cref{thm:HaarLU}, $|Q|=2$ non-zero eigenvalues suffice.

\section{Preliminaries}\label{sec:prelim}
We assume $d\in\N$ and $d\geq 2$ throughout. For a unit vector $v\in\C^d$ and an angle $\varphi\in[0,2\pi]$, we define the \emph{Householder reflection}\footnote{This is also known as \emph{generalized Householder reflection} or \emph{complex reflection}.}
$$R_v(\varphi):=\1+\big(e^{i\varphi}-1\big)vv^* \in U(d).$$
Note that this leaves the entire hyperplane $v^\perp$ unchanged and thus can be thought of as a reflection about this hyperplane.

The following elementary property can be found in the literature (e.g., Thm. 3.4 in \cite{Mackey2003}). We provide a short proof for completeness. 
\begin{lemma}\label{lem:Rxy}
    Let $x,y\in\C^d$ be distinct with equal norms and $v:=(y-x)/\|y-x\|$. Then there is a $\varphi\in[0,2\pi]$ s.t. $R_v(\varphi)x=y$. 
\end{lemma}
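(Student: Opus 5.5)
We compute $R_v(\varphi)x$ directly and choose $\varphi$ so that the $v$-component of $x$ is mapped onto the $v$-component of $y$. Set $w:=y-x\neq 0$, so $v=w/\|w\|$. By definition,
\[
R_v(\varphi)x = x + \big(e^{i\varphi}-1\big)\,(v^*x)\,v = x + \big(e^{i\varphi}-1\big)\frac{w^*x}{\|w\|^2}\,w .
\]
Hence $R_v(\varphi)x=y=x+w$ holds iff
\[
\big(e^{i\varphi}-1\big)\,w^*x=\|w\|^2 .
\]
This reduces the lemma to a scalar equation.

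Next we use $\|x\|=\|y\|$ to evaluate $w^*x$. Expanding $\|y\|^2=\|x+w\|^2=\|x\|^2+2\,\mathrm{Re}(w^*x)+\|w\|^2$ gives
\[
\mathrm{Re}(w^*x)=-\tfrac12\|w\|^2 .
\]
In particular $w^*x\neq 0$. Setting $a:=w^*x$, the required phase is
\[
e^{i\varphi}=1+\frac{\|w\|^2}{a}.
\]

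It remains to check that the right-hand side has modulus one; this is the only point that needs care. Write $a=-\tfrac12\|w\|^2+ib$ with $b\in\R$. Then
\[
1+\frac{\|w\|^2}{a}=\frac{a+\|w\|^2}{a}=\frac{\tfrac12\|w\|^2+ib}{-\tfrac12\|w\|^2+ib}.
\]
The numerator and denominator have equal modulus, so the quotient lies on the unit circle. We can therefore choose $\varphi\in[0,2\pi]$ with $e^{i\varphi}$ equal to this quotient, and then $R_v(\varphi)x=y$.
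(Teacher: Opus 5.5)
Your proof is correct and is essentially the paper's argument: with $\alpha:=v^*x$ and $\beta:=v^*y=\alpha+\|w\|$, your phase $1+\|w\|^2/a$ is exactly $\beta/\alpha$, which is the phase the paper uses. The only difference is how $|\alpha|=|\beta|$ is obtained. You derive it from the identity $\mathrm{Re}(w^*x)=-\tfrac12\|w\|^2$, whereas the paper reads it off from Pythagoras, since $x$ and $y$ have the same projection onto $v^\perp$.
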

\begin{proof}
    Since $y-x\propto v$, the vectors $x$ and $y$ must have the same projection on $v^\perp$. Denoting this projection by $z$, we can write $x=z+\alpha v$ and $y=z+\beta v$ for some $\alpha,\beta\in\C$.  The assumed equality $\|x\|=\|y\|$ then forces $|\alpha|=|\beta|$, so that there is a $\varphi\in[0,2\pi]$ with $\alpha e^{i\varphi}=\beta$. 
\end{proof}
A useful consequence is the following:
\begin{lemma}\label{lem:rankreduction}
    Let $W\in U(d)$ be a unitary, and $v\in\Ran(W-\1)$ a unit vector. Then there is a $\varphi\in[0,2\pi]$ s.t.
    $$\rank{R_v(\varphi)W-\1}=\rank{W-\1}-1 .$$
\end{lemma}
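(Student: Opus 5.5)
Since $v\in\Ran(W-\1)$, we can write $v=(W-\1)x$ for some $x\in\C^d$; note that $v\neq 0$ forces $x\notin\ker(W-\1)$. The idea is to pick $\varphi$ so that $R_v(\varphi)$ sends $Wx$ back to $x$. This makes $x$ a new fixed vector, while every vector already fixed by $W$ stays fixed. Concretely, we apply \Cref{lem:Rxy} to the pair $Wx$ and $x$. They have equal norms because $W$ is unitary, and they are distinct because $Wx-x=v\neq0$. The corresponding unit vector is $(x-Wx)/\|x-Wx\|=-v/\|v\|=-v$. Since $R_{-v}(\varphi)=R_v(\varphi)$, we obtain $\varphi$ with $R_v(\varphi)Wx=x$.

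We now compare kernels. First, $\ker(W-\1)\subseteq\ker(R_v(\varphi)W-\1)$. Indeed, for $u$ with $Wu=u$, normality of $W$ gives $\ker(W-\1)=\ker(W^*-\1)=\Ran(W-\1)^\perp$. Since $v\in\Ran(W-\1)$, we get $u\perp v$, and therefore $R_v(\varphi)Wu=R_v(\varphi)u=u$. Second, $x$ lies in $\ker(R_v(\varphi)W-\1)$ but not in $\ker(W-\1)$. Hence the kernel of $R_v(\varphi)W-\1$ contains $\ker(W-\1)\oplus\Span\{x\}$, and its dimension is at least $\dim\ker(W-\1)+1$. This gives $\rank{R_v(\varphi)W-\1}\le\rank{W-\1}-1$.

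For the reverse inequality we write
\[
R_v(\varphi)W-\1=(W-\1)+(e^{i\varphi}-1)vv^*W .
\]
The second term has rank at most one, so subadditivity of rank gives $\rank{W-\1}\le\rank{R_v(\varphi)W-\1}+1$. Combining the two inequalities yields equality.

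The only slightly delicate point is the inclusion of the old fixed space. It relies on $\Ran(W-\1)\perp\ker(W-\1)$, which holds because $W$ is normal. Everything else is a direct application of \Cref{lem:Rxy} together with rank subadditivity.
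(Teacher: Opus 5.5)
Your proof is correct and follows essentially the same route as the paper. You pick a preimage $x$ of $v$, use \Cref{lem:Rxy} to make $x$ a new fixed vector of $R_v(\varphi)W$, use normality to place the old fixed space inside $v^\perp$ so that it is preserved, and bound the rank drop by one because the two operators differ by a rank-one term; your explicit check that $R_{-v}(\varphi)=R_v(\varphi)$ handles a sign the paper leaves implicit.
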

\begin{proof}
    Choose $z$ s.t. $(W-\1)z=v$. Since $z$ and $Wz$ have equal norms and difference proportional to $v$, \Cref{lem:Rxy} guarantees the existence of a $\varphi$ with $(R_v(\varphi)W-\1)z=0$. Normality of $W$ gives $\ker(W-\1)=\ker(W^*-\1)=\Ran(W-\1)^\perp$. Hence, on the subspace $\ker(W-\1)\subseteq v^\perp$, the map $R_v(\varphi)$ acts as the identity. Therefore, $\ker(R_v(\varphi)W-\1) \supseteq \ker(W-\1)$. This inclusion is strict due to the vector $z$ so that the rank of $R_v(\varphi)W-\1$ must be strictly smaller than the rank of $W-\1$. On the other hand, the difference of the ranks can be at most one since the difference of the operators has rank one.
\end{proof}
A rank reduction similar to the one in \Cref{lem:rankreduction} is used in \cite{HouseholderUN} to construct a decomposition of an arbitrary unitary $W\in U(d)$  into Householder reflections. In fact, iterating \Cref{lem:rankreduction} shows that there exist $\varphi_1,\ldots,\varphi_d\in\R$ and $v_2,\ldots,v_d\in \C^d$ such that
\begin{equation}
    W=e^{i\varphi_1}R_{v_2}(\varphi_2)\cdots R_{v_d}(\varphi_d).\label{eq:Hhproduct}
\end{equation}
The idea in the present paper is to choose the $R_{v_j}
(\varphi_j)$'s in an adaptive way (and thus different from \cite{HouseholderUN}), so that each of them admits a minimal-cost decomposition in terms of the available transformations.

\section{Alternating diagonal representation}\label{sec:altdiag}
We fix a unitary $U\in U(d)$, denote by $\cT\subseteq U(d)$ the diagonal unitaries and by $\cT_{U}:=U\cT U^*$ the unitaries that are diagonal in the basis given by the column vectors of $U$. Our main results will only use a subset of $\cT_{U}$ defined by $\cT_{U,Q}:=\{V\in\cT_{U}|(U^*VU)_{k,k}\neq 1\Rightarrow k\in Q\}$ for a subset $Q\subseteq[d]:=\{1,\ldots, d\}$.

The aim of this section is to find good upper bounds on 
\begin{equation}\label{eq:LUdef}
    L_Q(U):=\min\big\{n\;|\;n\in\N,(\cT\cup \cT_{U,Q})^n=U(d)\big\},\quad \text{where}\quad\min\emptyset:=\infty,
\end{equation}
and $(\cT\cup \cT_{U,Q})^n$ is the set of products $D_1 \cdots D_n$ with $D_i\in\cT\cup\cT_{U,Q}$. That is, $L_Q(U)$ is the least $n$ such that every element of $U(d)$ is a \emph{word} of length at most $n$ in the \emph{alphabet} $\cT\cup \cT_{U,Q}$.
  Since both $\cT$ and $\cT_{U,Q}$ are groups, and thus closed under multiplication, the shortest product representing a given element in $U(d)$ is always one with factors alternating between $\cT$ and $\cT_{U,Q}$. 
  
  The relation to the Hamiltonian universality problem outlined in \Cref{eq:word} is the content of the following proposition.
Since a simultaneous unitary conjugation of $H_1$ and $H_2$ does not affect the question, we may assume that $H_1$ is diagonal or, equivalently, express everything in an eigenbasis of $H_1$.

\begin{proposition}\label{prop:NL}
Let $H_1=\operatorname{diag}(\lambda_1,\ldots,\lambda_d)$ and
$H_2=U\operatorname{diag}(\eta_1,\ldots,\eta_d)U^*$ with $\lambda,\eta\in\R^d$ and
$U\in U(d)$, and set $Q:=\{k\,|\,\eta_k\neq0\}$. Denote by
$N(H_1,H_2)\in\N\cup\{\infty\}$ the smallest $n$ such that the products in
\Cref{eq:word}, with $t_1,\ldots,t_n\geq0$ and either Hamiltonian first, form a
dense subset of $U(d)$. If $(\lambda_i)_{i\in[d]}$ and $(\eta_k)_{k\in Q}$ are each
linearly independent over $\Q$, then 
 \begin{equation}
          N(H_1,H_2) =L_{Q}(U).
      \end{equation}
\end{proposition}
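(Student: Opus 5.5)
We prove the two inequalities $N(H_1,H_2)\le L_Q(U)$ and $N(H_1,H_2)\ge L_Q(U)$ separately. The link between them is Kronecker's theorem. Since $(\lambda_i)$ is rationally independent, the one-parameter semigroup $\{e^{itH_1}\,|\,t\ge0\}$ is dense in $\cT$. Since $(\eta_k)_{k\in Q}$ is rationally independent and $\eta_k=0$ for $k\notin Q$, the semigroup $\{e^{itH_2}\,|\,t\ge 0\}=\{U e^{it\,\mathrm{diag}(\eta)}U^*\}$ is dense in $\cT_{U,Q}$. The point is that the coordinates $k\notin Q$ stay fixed at $1$, and the remaining coordinates fill the torus $\{V\in\cT_U\,|\,(U^*VU)_{kk}=1 \text{ for } k\notin Q\}$, which is exactly $\cT_{U,Q}$. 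For the semigroup, as opposed to the group, one uses the standard fact that the forward orbit of an irrational flow on a torus is dense.

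\emph{Upper bound.} Let $n=L_Q(U)<\infty$ (otherwise there is nothing to show). Then every $W\in U(d)$ is an alternating word of length at most $n$ in $\cT$ and $\cT_{U,Q}$. Padding with identity factors, we may take the length to be exactly $n$ with a fixed starting letter. Both starting letters must be allowed, which is why the proposition says ``either Hamiltonian first''. Multiplication $\mu\colon (\cT\times\cT_{U,Q}\times\cdots)\to U(d)$ is continuous. Each factor can be approximated by $e^{itH_j}$ with $t\ge0$, so by continuity $W$ is approximated by a product as in \Cref{eq:word}. Hence the union of the two families of products of length $n$ is dense, and $N\le n$.

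\emph{Lower bound.} Suppose the products of length $n$ are dense. Every such product lies in the image $S_n:=\mu\big((\text{alternating factors in } \cT,\cT_{U,Q})^n\big)$, taken over both starting letters. This is a finite union of continuous images of compact sets, hence compact and closed. A closed set containing a dense subset equals $U(d)$, so $(\cT\cup\cT_{U,Q})^n=U(d)$ and $L_Q(U)\le n$. If $N=\infty$ the inequality is trivial.

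The main obstacle is the exact density claim for $\cT_{U,Q}$: we must check that Kronecker yields density in the correct torus with forward times only, and that the degenerate coordinates (those with $\eta_k=0$) match the definition of $\cT_{U,Q}$. We also need to handle the indexing conventions: ``length $n$'' versus ``at most $n$'', and the padding with identities, which requires $\1\in$ the closure of the semigroup. This holds by recurrence of the irrational flow, or trivially by taking $t=0$.
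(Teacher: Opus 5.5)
Your argument is correct and is essentially the paper's: Kronecker's theorem makes the forward orbits $T_j=\{e^{itH_j}\,|\,t\geq0\}$ dense in $\cT$ and $\cT_{U,Q}$, and continuity of multiplication together with compactness of the image shows that the length-$n$ products are dense iff $(\cT\cup\cT_{U,Q})^n=U(d)$; the paper packages this as the single identity $\overline{(T_1\cup T_2)^n}=(\cT\cup\cT_{U,Q})^n$. Your split into two inequalities additionally makes explicit that $N(H_1,H_2)\geq L_Q(U)$ uses only the inclusions $T_1\subseteq\cT$ and $T_2\subseteq\cT_{U,Q}$, and so holds without any genericity assumption.
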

\emph{Remark: } Note that linear independence over $\Q$ holds with probability one if the eigenvalues are drawn randomly from an absolutely continuous distribution.
\begin{proof}
Let $T_j:=\{e^{itH_j}\,|\,t\geq0\}$ for $j=1,2$. By Kronecker's theorem (e.g.,
\cite[Ch.~XXIII]{HardyWright}), linear independence of $\lambda_1,\ldots,\lambda_d$
over $\Q$ implies that $\overline{T_1}=\cT$. Likewise, the assumption on $(\eta_k)_{k\in Q}$ gives
$\overline{T_2}=\cT_{U,Q}$.

As $T_1$ and $T_2$ are closed under multiplication and contain $\1$, the set of
products in \Cref{eq:word} with either Hamiltonian first is $(T_1\cup T_2)^n$. The multiplication map from the compact set
$(\cT\cup\cT_{U,Q})^{\times n}$ to $U(d)$ is continuous, so it maps the dense subset
$(T_1\cup T_2)^{\times n}$ onto a dense subset of its compact image
$(\cT\cup\cT_{U,Q})^n$. Hence
$\overline{(T_1\cup T_2)^n}=(\cT\cup\cT_{U,Q})^n$, so the products in
\Cref{eq:word} are dense in $U(d)$ iff $(\cT\cup\cT_{U,Q})^n=U(d)$. Comparing with \Cref{eq:LUdef} yields $N(H_1,H_2)=L_Q(U)$.
\end{proof}

Inspired by \Cref{eq:Hhproduct}, we will build up $U(d)$ from Householder reflections, which will be chosen so that they themselves can be written as products of only a few elements from $\cT$ and $\cT_{U,Q}$. 

 Let us call a column of $U$ whose index is contained in $Q$ a $Q$-\emph{column}.
 If $v$ is a $Q$-column of $U$, say $U e_j=v$ with $j\in Q$, then $R_v(\varphi)\in\cT_{U,Q}$ since $R_v(\varphi)=U(\1+(e^{i\varphi}-1)e_j e_j^*)U^*$ and $(U^*R_v(\varphi)U)_{k,k}=1+(e^{i\varphi}-1)\delta_{j,k}$. Similarly, for any $D\in\cT$: $R_{Dv}(\varphi)\in\cT\cT_{U,Q}\cT$. That is, for any vector whose moduli match the ones of a $Q$-column of $U$, the corresponding Householder reflections can be implemented at the cost of only a single element from $\cT_{U,Q}$. If the moduli are different, the following can be used:
\begin{lemma}\label{lem:ampamp}
    Let $u,v\in\C^d$ be unit vectors and let $s$ be the largest constant s.t. $\forall i:s|v_i|\leq |u_i|$. Assume $s>0$ and define\vspace*{-4pt} \begin{equation}m(s):=\left\lceil\frac{\pi}{4 \arcsin s}-\frac12\right\rceil,\quad n:=2m(s)+1.\end{equation} For any $\varphi\in[0,2\pi]$ there exist $\theta_1,\ldots,\theta_n\in[0,2\pi]$ and $D_0,\ldots,D_n\in\cT$ s.t.
    \begin{equation}\label{eq:RvRseq}
        R_v(\varphi)=D_0 R_u(\theta_1)D_1 R_u(\theta_2)\cdots D_{n-1}R_u(\theta_n)D_n.
    \end{equation}
\end{lemma}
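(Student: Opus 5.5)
The plan is to reduce \eqref{eq:RvRseq} to a conjugation identity and then use an amplitude-amplification (Grover-type) argument to build the conjugating unitary.

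\textbf{Conjugation reduction.} For $D\in\cT$ and any unit $x$ we have $DR_x(\theta)D^*=R_{Dx}(\theta)$. More generally, $VR_x(\theta)V^*=R_{Vx}(\theta)$ for every unitary $V$. Suppose we find a product
$$V=D_0R_u(\theta_1)D_1\cdots R_u(\theta_m)D_m$$
with $m=m(s)$ factors $R_u$ and a diagonal unitary $E$ such that $V^*v=Eu$. Then
$$R_v(\varphi)=V\,R_{Eu}(\varphi)\,V^*=V\,E\,R_u(\varphi)\,E^*\,V^*.$$
Since $R_u(\theta)^*=R_u(-\theta)$, the factor $V^*$ is again of the same form with $m$ reflections. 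Altogether this uses $2m+1=n$ reflections $R_u$ interleaved with diagonals, which is exactly \eqref{eq:RvRseq}. So the task reduces to a transport problem: move $v$ to a vector whose moduli equal $(|u_i|)_i$, using $m$ steps of the form ``apply a diagonal, then apply some $R_u(\theta)$''.

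\textbf{Single step.} By \Cref{lem:Rxy}, a vector $x$ can be mapped by a reflection $R_{Du}(\theta)=DR_u(\theta)D^*$ to any $y$ with $\|y\|=\|x\|$ and $y-x\propto Du$. Since $D$ has arbitrary phases, one step lets us pass from $x$ to any unit vector of the form $x_i+\lambda e^{i\alpha_i}|u_i|$, with one common complex $\lambda$ and free phases $\alpha_i$. The phases of the resulting vector can also be adjusted afterwards by diagonals at no cost. So only the moduli vector matters, and we work with nonnegative real vectors throughout.

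\textbf{Two-dimensional plane.} The hypothesis $s|v_i|\le|u_i|$ gives the decomposition
$$|u|=s|v|+t,\qquad t\ge0 \text{ componentwise}.$$
I would track moduli vectors of the form $a|v|+bt$ inside the real plane spanned by $|v|$ and $t$. Coordinatewise sign choices (diagonals with entries $\pm1$, or suitable phases) realize reflections about $|u|$ within this plane, composed with appropriate diagonal phase flips. This mimics a Grover iteration: each step rotates the direction of the moduli vector by an angle of about $2\arcsin s$, where $\arcsin s$ plays the role of the Grover angle, since the component of $|u|$ along $|v|$ is at least $s$.

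Starting from $|v|$, we need to reach the direction $|u|$, which requires an angular distance of at most $\pi/2$. Counting rotations in the standard Grover way, with the last step being a shortened partial rotation made possible by the free angle $\theta$, gives
$$m=\left\lceil \frac{\pi}{4\arcsin s}-\frac12\right\rceil$$
steps, which matches $m(s)$.

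\textbf{Main obstacle.} The hard part is making this Grover picture rigorous. One must check that at each step the signs and phases can be chosen coordinatewise so that the new vector has exactly the intended moduli while staying a unit vector. One must also check that the rotation per step really is at least $2\arcsin s$, using only the componentwise bound $s|v_i|\le|u_i|$ and not the inner product $\langle|u|,|v|\rangle$.

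My first attempt would be to use the free phases to make $|v|$ and the moduli target effectively orthogonal on a suitable support, reducing to the textbook Grover geometry. The overshoot in the final step is handled by a fixed-point/partial-angle choice of $\theta_m$, using continuity in $\theta$ together with the intermediate value theorem.
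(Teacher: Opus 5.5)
\textbf{There is a genuine gap.} Your outer structure is right and coincides with the paper's. You write $R_v(\varphi)=GR_u(\varphi)G^*$, where the transport map $G$ is built from $m$ diagonally dressed reflections $R_u(\theta_j)$, and a single step is governed by \Cref{lem:Rxy}. But the actual content of the lemma, an explicit chain of intermediate vectors achieving the count $m(s)$, is exactly what you leave as the ``main obstacle''. The mechanism you sketch for it fails: moduli vectors confined to the real plane $\Span\{|v|,t\}$, $t=|u|-s|v|$, cannot in general be connected to $|v|$ at all.

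Take $d=3$, $v=(0.6,0.8,0)$ and $|u|=(0.12,0.9,\sqrt{0.1756})$, so $s=0.2$ and $t=(0,0.74,\sqrt{0.1756})$. Suppose $x$ has moduli $p=a|v|+bt$ with $a,b\ge0$, and $y=x+\lambda Du$ has moduli $|v|$.
\begin{itemize}
\item The third coordinate forces $|\lambda|=b$.
\item The first coordinate then forces $|1-a|\le 0.2\,b$.
\item Hence $\|p\|^2=a^2+1.184\,ab+0.7232\,b^2\ge 1+0.784\,b+0.5264\,b^2>1$ unless $b=0$.
\end{itemize}
So no in-plane moduli vector other than $|v|$ itself is one step away from $|v|$. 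Note that $\langle|u|,|v|\rangle=0.792$ is large here, so the overlap is not what controls the admissible step; the componentwise ratio is.

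Two smaller points also fail. First, real reflections combined with $\pm1$ diagonals are insufficient: for $v=e_1$, $u=(1,1)/\sqrt2$, where $m(s)=1$, they cannot move $|u|$ at all. Second, a one-parameter intermediate-value argument in $\theta_m$ does not by itself give an exact landing on $v$.

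\textbf{The missing idea} is to combine $v$ (w.l.o.g.\ nonnegative) with an auxiliary nonnegative vector $w$ in quadrature rather than linearly:
\[
z_\pm(\alpha)=\sin(\alpha)v\pm i\cos(\alpha)w,\qquad w_i=\sqrt{|u_i|^2-v_i^2\sin^2\beta}/\cos\beta .
\]
This is exactly where the componentwise hypothesis enters, since $w$ is real iff $\sin\beta\le s$. With this choice:
\begin{itemize}
\item $|z_\pm(\alpha)_i|^2=\sin^2\alpha\,v_i^2+\cos^2\alpha\,w_i^2$, and every $z_\pm(\alpha)$ is a unit vector.
\item $z_\pm(\beta)$ has the moduli of $u$.
\item The identity $z_-(\alpha+2\beta)-z_+(\alpha)=2\cos(\alpha+\beta)\,z_-(\beta)$ makes $z_+(\alpha)\mapsto z_-(\alpha+2\beta)$ a single dressed $R_u(\theta)$, by \Cref{lem:Rxy}.
\item The diagonal map $z_-(\alpha)\mapsto z_+(\alpha)$, which is coordinatewise complex conjugation, plays the role of the second Grover reflection.
\end{itemize}
This is the ``effective orthogonality'' you were looking for, but it comes from $\mathrm{Re}\langle v,iw\rangle=0$, not from disjoint supports.

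There is also no overshoot to repair. Choose $\beta=\pi/(2(2m+1))$; this satisfies $\sin\beta\le s$ exactly when $m\ge\pi/(4\arcsin s)-\tfrac12$. Then $m$ steps starting from $z_+(\beta)$ land exactly on $z_+(\pi/2)=v$.
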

\begin{proof} W.l.o.g., we can assume that all entries of $v$ are non-negative, since all phases can be absorbed into diagonal unitaries. Similarly, we may use $Du$ for any $D\in\cT$ instead of $u$.
    We will construct a finite sequence of vectors in $\C^d$,
    $$Du=u(0)\longmapsto u(1)\longmapsto \cdots \longmapsto u(m)=v,  $$ starting at $Du$ and ending at $v$, so that, up to diagonal unitaries, the difference of any neighboring pair $u(j)$, $u(j+1)$ is proportional to $u$.  \Cref{lem:Rxy} then guarantees the existence of a corresponding sequence of $m$ Householder reflections $R_u(\theta_j)$ that map $u$ step-by-step to $v$ when interleaved with suitable diagonal unitaries. Denoting the resulting map by $G:u\mapsto v$, we can write $R_v(\varphi)=G R_u(\varphi) G^*$. Unraveling $G$, we obtain \Cref{eq:RvRseq} with $n=2m+1$ Householder reflections of the form $R_u(\theta_j)$.

In order to construct the sought sequence, we define
\begin{equation}
    z_\pm(\alpha):=\sin(\alpha)v\pm i\cos(\alpha) w,\quad\text{for }\quad \alpha\in[0,\pi/2],
\end{equation}
where the entries of $w$ are chosen non-negative and such that for some $\beta\in[0,\pi/2]$, to be chosen later, we have $\forall i:|z_\pm(\beta)_i|=|u_i|$. The latter implies that $$w_i=\frac{\sqrt{|u_i|^2-v_i^2\sin(\beta)^2}}{\cos(\beta)},$$ which requires $\sin(\beta)\leq s$. The equal-moduli condition means we can set $u(0):=z_+(\beta)$. Before we construct the other points in the sequence, note that the trigonometric identities 
    \begin{align} \sin(\alpha+2\beta) - \sin(\alpha) &= 2\cos(\alpha+\beta) \sin(\beta),\quad \text{and}\nonumber  \\ \cos(\alpha+2\beta) + \cos(\alpha) &= 2\cos(\alpha+\beta)\cos(\beta)\quad\text{imply}\nonumber\\
    \label{eq:zrec}
    z_-(\alpha+2\beta)-z_+(\alpha)&=2\cos(\alpha+\beta)z_-(\beta).\end{align}
Since $z_-(\beta)$ has the same moduli as $u$, \Cref{lem:Rxy} together with \Cref{eq:zrec} implies that for any $\alpha$ the map $z_+(\alpha)\mapsto z_-(\alpha+2\beta)$ can be implemented by a single Householder reflection $R_u(\theta)$ sandwiched by diagonal unitaries. The same is true for $z_+(\alpha)\mapsto z_+(\alpha+2\beta)$ since the components of $z_+$ and $z_-$ have equal moduli (the diagonal unitary realizing $z_-(\alpha + 2\beta) \mapsto z_+(\alpha+2\beta)$ can be absorbed into one of those sandwiching $R_u(\theta)$). Hence, we can set $u(j):=z_+\big((2j+1)\beta\big)$. This yields $u(m)=v$ if $(2m+1)\beta=\pi/2$, which is consistent with $\sin(\beta)\leq s$ only if $$\frac{\pi}{4\arcsin(s)}-\frac12\leq m.$$ Choosing the smallest such $m\in\N$, and the corresponding value for $\beta$, then completes the proof for any $s<1$. For $s=1$, however, $v$ and $u$ have the same moduli so that the statement of the Lemma follows immediately by diagonal conjugation.
\end{proof}
For any $u\in\C^d$ and $k\in\{1,\ldots,d\}$ let $u_{[k]}\in\R^k$ be the vector that contains the $k$ smallest moduli of the coordinates of $u$ and denote by $\|u_{[k]}\|$ its Euclidean norm. The following will help us to identify the vector $v$ to which \Cref{lem:ampamp} should be applied:
\begin{lemma}\label{lem:uk}
    Let $k\in\{1,\ldots,d\}$, $u\in\C^d$ and $V\subseteq\C^d$ be a $k$-dimensional subspace. There is a unit vector $v\in V$ satisfying
    \begin{equation}
        \forall i:\quad\|u_{[k]}\|\cdot |v_i|\leq |u_i|.
    \end{equation}
\end{lemma}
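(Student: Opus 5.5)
The plan is to rescale coordinates by the moduli of $u$, which turns the statement into a question about vectors of a $k$-dimensional subspace with many coordinates of maximal modulus.

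\textbf{Zero entries.} Let $Z:=\{i\,|\,u_i=0\}$. If $|Z|\geq k$, then $\|u_{[k]}\|=0$ and any unit $v\in V$ works. Otherwise set $V_0:=\{x\in V\,|\,x_i=0\ \forall i\in Z\}$. Since we impose $|Z|$ linear conditions on a $k$-dimensional space, $k_0:=\dim V_0\geq k-|Z|\geq 1$. Any unit $v\in V_0$ satisfies the required inequality trivially on $Z$. Moreover, the $k$ smallest moduli of $u$ consist of the $|Z|$ zeros together with the $k-|Z|$ smallest nonzero moduli. Hence $\|u_{[k]}\|^2$ is bounded above by the sum of the squares of any $k-|Z|$ of the nonzero moduli. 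So it suffices to treat the coordinates outside $Z$, where all $|u_i|>0$, with a subspace of dimension at least $k-|Z|$. To keep notation simple, from now on I assume all $u_i\neq0$ and $\dim V=k$.

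\textbf{Rescaling.} Let $W:=\operatorname{diag}(|u_1|,\ldots,|u_d|)$, which is invertible, and let $V':=W^{-1}V$, still of dimension $k$. Consider the compact convex set $K:=\{y\in V'\,|\,\|y\|_\infty\leq1\}$. It is nonempty and not $\{0\}$, so by Krein--Milman it has an extreme point $y\neq 0$. Let $S:=\{i\,|\,|y_i|=1\}$ be the saturated coordinates of $y$.

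\textbf{Main step: $|S|\geq k$.} Suppose $|S|<k$. Then there is a nonzero $h\in V'$ with $h_i=0$ for all $i\in S$, since this imposes fewer than $k$ linear conditions on a $k$-dimensional space. For all $i\notin S$ we have $|y_i|<1$. Hence for small $\epsilon>0$ both $y\pm\epsilon h$ lie in $K$, contradicting extremality of $y$.

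Two remarks on this step. In the complex setting it is enough that saturated coordinates force $h_i=0$; this is exactly how $h$ was chosen, and we never need the converse. Also $S\neq\emptyset$: if $S$ were empty, the same argument with $h=y$ gives a contradiction.

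\textbf{Conclusion.} Since $|y_i|=1$ for $i\in S$ and $|S|\geq k$,
\[
\|Wy\|^2\;\geq\;\sum_{i\in S}|u_i|^2\;\geq\;\|u_{[k]}\|^2 .
\]
Set $v:=Wy/\|Wy\|\in V$, which is a unit vector. Using $|y_i|\leq 1$,
\[
\|u_{[k]}\|\,|v_i| \;=\; \|u_{[k]}\|\,\frac{|u_i|\,|y_i|}{\|Wy\|}\;\leq\;|u_i| \qquad\text{for all } i,
\]
which is the claim of \Cref{lem:uk}.

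The main obstacle is the extreme-point count $|S|\geq k$. Everything else is bookkeeping, but this step is where the dimension of $V$ enters.
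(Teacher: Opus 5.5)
Your proof is correct and is essentially the paper's argument. The paper takes a norm maximizer $x$ of the unscaled set $C=\{x\in V\mid |x_i|\le|u_i|\}$ and uses the same perturbation $x\pm\epsilon h$, with $h\in V$ vanishing on the saturated coordinates; the contradiction there comes from the parallelogram law rather than from extremality. Working with $C$ directly, the paper needs neither the rescaling nor Krein--Milman, and a zero entry of $u$ is automatically a saturated constraint, so your separate reduction for $Z$ is not needed. That reduction does, however, cleanly cover the degenerate case $C=\{0\}$, where the paper's normalization $x/\|x\|$ is undefined and one must tacitly use $\|u_{[k]}\|=0$.
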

\begin{proof}
    Consider a maximizer of the norm $\|x\|$ over the compact convex set $C:=\{x\in V|\forall i:|x_i|\leq|u_i|\}$. We will show that at least $k$ of these inequalities must be equalities: suppose equality holds only for $l<k$ coordinates and  $P:V\rightarrow \C^l$ projects onto them. Then $P$ must have a non-trivial kernel, i.e., there is a non-zero $y\in V$ with $Py=0$. Hence, $x\pm\epsilon y\in C$ for sufficiently small $\epsilon>0$. By the parallelogram law, their average squared norm exceeds that of $x$ since\vspace*{-6pt}
    $$\|x+\epsilon y\|^2+\|x-\epsilon y\|^2=2\|x\|^2+2\epsilon^2\|y\|^2 > 2\|x\|^2,$$
    contradicting the maximality of $\|x\|$. So any maximizer $x\in C$ of $\|x\|$ satisfies $\|x\|\geq\|u_{[k]}\|$.  Then $v:=x/\|x\|$ satisfies
    $\|u_{[k]}\|\cdot |v_i|\leq |x_i|\leq |u_i|   $ for all $i$.
\end{proof}
Now we have all the ingredients to prove the following:
\begin{theorem}\label{thm:main1} Let $u\in\C^d$ be a unit vector such that every $R_u(\varphi)$ can be written as a product of elements of $\cT\cup\cT_{U,Q}$ using at most $c$ elements from $\cT_{U,Q}$. Then
    \begin{equation}\label{eq:LUbound}
        L_{Q}(U)\leq1+2c\sum_{k=1}^{d-1}\Big[2m\big(\|u_{[k]}\|\big)+1\Big].
    \end{equation}
\end{theorem}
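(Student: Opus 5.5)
The plan is to decompose an arbitrary $W\in U(d)$ into at most $d-1$ Householder reflections plus a global phase, as in \Cref{eq:Hhproduct}. The twist is to choose the reflection vectors adaptively, using \Cref{lem:uk}, so that the $k$-th reflection is ``close'' to $u$ in the sense required by \Cref{lem:ampamp}. The global phase is a diagonal unitary, so it costs one element of $\cT$; this accounts for the leading $1$ in \Cref{eq:LUbound}. Moreover, all the diagonal unitaries appearing in the pieces merge with their neighbours, since $\cT$ is a group.

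The rank reduction runs as follows. Set $W_d:=W$ and let $r=\rank{W-\1}\le d$. If $r=d$, the global phase is not yet removed, so we first multiply by a suitable phase $e^{-i\varphi_1}$. Choosing $\varphi_1$ as minus the argument of an eigenvalue of $W$ makes $e^{-i\varphi_1}W$ have eigenvalue $1$, so $\rank{e^{-i\varphi_1}W-\1}\le d-1$.

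Now let $W'$ be a unitary with $k:=\rank{W'-\1}\ge1$. Its range $V:=\Ran(W'-\1)$ is $k$-dimensional, so \Cref{lem:uk} supplies a unit vector $v\in V$ with $\|u_{[k]}\|\,|v_i|\le|u_i|$ for all $i$. In other words, the constant $s$ of \Cref{lem:ampamp} for the pair $u,v$ satisfies $s\ge\|u_{[k]}\|>0$. Strict positivity holds as long as $u$ has at most $d-k$ zero entries; this holds automatically when $\|u_{[k]}\|>0$, and otherwise the right-hand side of \Cref{eq:LUbound} is infinite (since $m(0)=\infty$) and there is nothing to prove.

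By \Cref{lem:rankreduction} there is a $\varphi$ with $\rank{R_v(\varphi)W'-\1}=k-1$, and $R_v(\varphi)^{-1}=R_v(-\varphi)$ is again a Householder reflection. Iterating for $k=d-1,d-2,\dots,1$ gives
\[
W=e^{i\varphi_1}R_{v_{d-1}}(\psi_{d-1})\cdots R_{v_1}(\psi_1),
\]
where $v_k$ is obtained from a $k$-dimensional range.

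The counting step uses that $m$ is nonincreasing in $s$. Applying \Cref{lem:ampamp} to each $R_{v_k}$ expresses it with at most $2m(\|u_{[k]}\|)+1$ reflections $R_u(\theta)$, interleaved with diagonal unitaries. If $s$ exceeds $\|u_{[k]}\|$, using $\beta$ with $\sin\beta\le\|u_{[k]}\|\le s$ is still admissible; monotonicity of $m$ suffices either way. By hypothesis, each $R_u(\theta)$ is a word in $\cT\cup\cT_{U,Q}$ containing at most $c$ letters from $\cT_{U,Q}$. Hence $R_{v_k}$ is a word with at most $c\,[2m(\|u_{[k]}\|)+1]$ letters from $\cT_{U,Q}$.

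In a reduced alternating word, the $\cT$-letters lie between consecutive $\cT_{U,Q}$-letters or at the ends. So a word with $N$ letters from $\cT_{U,Q}$ has length at most $2N+1$. Summing $N\le c\sum_{k=1}^{d-1}[2m(\|u_{[k]}\|)+1]$ gives a word of length at most $1+2c\sum_k[\dots]$, and the global phase merges into a $\cT$-letter. Because this representation holds for every $W$, we get $(\cT\cup\cT_{U,Q})^n=U(d)$ for the stated $n$.

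The main obstacle I expect is the bookkeeping in this last step. One must check that merging adjacent letters from the same group, including the diagonal $D_j$ from \Cref{lem:ampamp} and the phase, never increases the count, so that the length bound $2N+1$ in terms of the number $N$ of $\cT_{U,Q}$-letters is valid. The other point needing care is that \Cref{lem:uk} must be applied with $k$ equal to the current rank rather than with $d$. This is what gives the sum over $k$ instead of $d-1$ copies of the worst term $m(\|u_{[d-1]}\|)$.
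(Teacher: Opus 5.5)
Your proposal is correct and follows essentially the same route as the paper. Like the paper, you rescale by a phase so that $W$ has eigenvalue $1$. You then remove Householder reflections $R_v(\varphi)$ with $v$ chosen in the current $k$-dimensional range via \Cref{lem:uk} and $\varphi$ via \Cref{lem:rankreduction}, expand each one with \Cref{lem:ampamp}, and count $\cT_{U,Q}$-letters to land in $(\cT\cT_{U,Q})^N\cT$, which has length $2N+1$.

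There are two harmless slips. First, the leading $1$ comes from that $2N+1$ count, i.e., the extra $\cT$-letter of an alternating word, not separately from the phase; the phase simply merges into a $\cT$-letter, as you say later. Second, $\|u_{[k]}\|>0$ means $u$ has fewer than $k$ zero entries, not at most $d-k$; your fallback that the bound is otherwise infinite covers this case anyway.
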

\emph{Remark: }Note that if $u$ is a $Q$-column of $U$, then $c=1$.
\begin{proof}
    Let $W\in U(d)$ be the target that we want to write as a product of elements from $\cT\cup\cT_{U,Q}$ with a number of factors that is bounded by the r.h.s. of \Cref{eq:LUbound}. 
    By multiplying $W$ with a phase, we can ensure that $W_0:=e^{i\varphi_0}W$  has an eigenvalue $1$. Then define recursively $W_{l}:=R_{v_{l}}(\varphi_{l})W_{l-1}$, where $v_{l}\in V_l:=\Ran(W_{l-1}-\1)$ is chosen by \Cref{lem:uk} and $\varphi_l$ by \Cref{lem:rankreduction} in order to obtain $\dim(V_{l+1})=\dim(V_l)-1$. Following \Cref{lem:ampamp},  $R_{v_{l}}(\varphi_l)$ costs at most $2m\big(\|u_{[\dim(V_l)]}\|\big)+1$ applications of $R_u(\cdot)$ and thus at most $c$ times as many  elements from $\cT_{U,Q}$. After $K\leq(d-1)$ steps, this iteration finally yields $W_K=\1$ and therefore $R_{v_K}(\varphi_K)\cdots R_{v_1}(\varphi_1)e^{i\varphi_0}=W^*$. Taking the adjoint of this equation, and using that generally $R_v(\varphi)^*=R_v(-\varphi)$, yields a representation of $W$ as a product of elements in $\cT\cup\cT_{U,Q}$ that uses at most $$N:=c\sum_{k=1}^{d-1}\Big[2m\big(\|u_{[k]}\|\big)+1\Big]$$ elements of $\cT_{U,Q}$. Merging neighboring factors from the same group and inserting identities
where necessary yields $W\in(\mathcal{T}\mathcal{T}_{U,Q})^N\mathcal{T}
$ and therefore $L_{Q}(U)\leq 2N+1$.
\end{proof}
In order to obtain a better understanding of the scaling behavior of the bound in \Cref{{eq:LUbound}}, we provide an estimate of the involved sum in terms of a lower bound on the moduli of the entries of $u$:

\begin{lemma}
    Let $u\in\C^d$ be a unit vector whose moduli are bounded from below by $|u_i|\geq\nu>0$ for all $i$. Then the following bounds hold:
    \begin{equation}\label{eq:nubounds}
        \sum_{k=1}^{d-1}\Big[2m\big(\|u_{[k]}\|\big)+1\Big]\leq\begin{cases}
            3(d-1)+\left(\frac{\pi}{3\nu}\right)^2\\
            2(d-1)+\frac{\pi\sqrt{d-1}}{\nu}
        \end{cases}
    \end{equation}
\end{lemma}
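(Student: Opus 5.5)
The plan is to bound each summand through an explicit estimate on $m(s)$ and then use the hypothesis $|u_i|\geq\nu$ in the form
$$s_k:=\|u_{[k]}\|\geq \nu\sqrt{k}.$$
This holds because $u_{[k]}$ has $k$ entries, each at least $\nu$. The two bounds in \Cref{eq:nubounds} then come from two different ways of summing.

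\emph{Second bound.} Since $\lceil x\rceil<x+1$, we get $2m(s)+1< \frac{\pi}{2\arcsin s}+2$. Using $\arcsin s\geq s$ gives $2m(s_k)+1\leq 2+\frac{\pi}{2\nu\sqrt k}$. Summing over $k=1,\ldots,d-1$ and using $\sum_{k=1}^{d-1}k^{-1/2}\leq 2\sqrt{d-1}$ (compare with $\int_0^{d-1}x^{-1/2}\,dx$) yields $2(d-1)+\pi\sqrt{d-1}/\nu$. This part is routine.

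\emph{First bound.} Write $2m_k+1=3+2(m_k-1)$ when $m_k\geq1$. Note that $m_k\geq 1$ always, since $s_k\leq 1$ for $k\leq d-1$ and $m(s)=0$ forces $\arcsin s\geq\pi/2$. The baseline therefore contributes $3(d-1)$, and it remains to show
$$2\sum_k (m_k-1)\leq(\pi/(3\nu))^2 .$$
I will use a layer-cake decomposition:
$$\sum_k(m_k-1)=\sum_{j\geq2}\#\{k: m_k\geq j\}.$$
By the definition of $m$, $m_k\geq j$ iff $\arcsin s_k<\pi/(4j-2)$, i.e.\ $s_k<\sin\!\big(\pi/(4j-2)\big)$. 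Combined with $s_k\geq\nu\sqrt k$, this gives $k<\sin^2(\pi/(4j-2))/\nu^2$, so $\#\{k:m_k\geq j\}\leq \sin^2(\pi/(4j-2))/\nu^2$. Hence it suffices to show
$$S:=\sum_{j\geq 2}\sin^2\!\Big(\frac{\pi}{4j-2}\Big)\leq\frac{\pi^2}{18}.$$

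\emph{Main obstacle.} This last numerical inequality is the delicate step. The crude estimate $\sin x\leq x$ gives $S\leq \frac{\pi^2}{4}\sum_{j\geq2}(2j-1)^{-2}=\frac{\pi^2}{4}\big(\frac{\pi^2}{8}-1\big)\approx0.578$. This slightly exceeds $\pi^2/18\approx0.548$, so it is too weak. I would instead evaluate the first several terms exactly: $\frac14$, $\sin^2\frac{\pi}{10}\approx0.0955$, $\sin^2\frac{\pi}{14}\approx0.0495$, $\sin^2\frac{\pi}{18}\approx0.0302$, and so on. Only from some $j_0$ onward would I use $\sin^2x\leq x^2-x^4/3$ (or just $x^2$), with an integral comparison for the tail $\sum_{j\geq j_0}(2j-1)^{-2}\leq\frac{1}{2(2j_0-2)}$.

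If this turns out to be too tight, I would sharpen the counting instead. The number of integers $k\geq1$ with $k<x$ is $\lceil x\rceil-1$, and it is $0$ when $x\leq1$. Since $\nu\leq 1/\sqrt d$, one can also cap every count by $d-1$. Either refinement should give enough slack to reach the constant $(\pi/3)^2$.
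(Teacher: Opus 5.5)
Your reduction is the paper's own. The second bound is derived identically. For the first bound, the paper likewise writes $\lceil x\rceil$ as a sum of indicators, uses that $m_k\ge j$ iff $s_k<\sin(\pi/(4j-2))$, bounds the $j=1$ layer by $d-1$ and each layer $j\ge2$ by $\sin^2(\pi/(4j-2))/\nu^2$, and is left with $S\le\pi^2/18$. Bounding the $j=1$ layer by $d-1$ also avoids your claim $m_k\ge1$. That claim needs $s_k<1$ rather than $s_k\le1$; it does hold, because the other $d-k\ge1$ entries of $u$ are nonzero.

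The gap is in certifying $S\le\pi^2/18$.

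\emph{The inequality you cite is reversed.} $\sin^2x\le x^2-x^4/3$ is false for every $x\neq0$: from $\cos y\le 1-\frac{y^2}{2}+\frac{y^4}{24}$ one gets $\sin^2x\ge x^2-\frac{x^4}{3}$. The paper goes one order further and uses $\sin^2x\le x^2-\frac{x^4}{3}+\frac{2x^6}{45}$ for all $j\ge2$, with $x=\pi/(4j-2)$. It then sums in closed form via $\sum_{n\ \mathrm{odd}}n^{-2}=\frac{\pi^2}{8}$, $\sum_{n\ \mathrm{odd}}n^{-4}=\frac{\pi^4}{96}$, $\sum_{n\ \mathrm{odd}}n^{-6}=\frac{\pi^6}{960}$. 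This gives about $0.54781<0.54831\approx\pi^2/18$.

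\emph{Your $\sin^2x\le x^2$ variant works, but only with a long enough head.} Your tail bound $\frac{\pi^2}{8(2j_0-2)}$ is correct (it follows from convexity). With it, the cutoff $j_0=6$ gives about $0.54853>\pi^2/18$, so the argument fails there. You need $j_0\ge7$, where the margin is under $10^{-4}$. At that precision, $\sin^2(\pi/14)$ and $\sin^2(\pi/22)$ must be certified, which in practice needs a Taylor upper bound like the paper's.

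\emph{Your fallback cannot create slack.} Take flat $u$, so $\nu=1/\sqrt d$. The layer counts for $j\ge2$ are $\lceil d\sin^2(\pi/(4j-2))\rceil-1<d-1$, so the cap by $d-1$ never binds, and the rounding saves only $O(\sqrt d)$. The left-hand side is then $3(d-1)+2dS-O(\sqrt d)$. Hence the first bound is asymptotically equivalent to $S\le\pi^2/18$ itself. The true value $S\approx0.54779$ leaves a margin of only about $5\cdot10^{-4}$, so this numerical inequality has to be proved, not circumvented.
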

\begin{proof} The assumption on $u$ implies that $\|u_{[k]}\|\geq\nu\sqrt{k}$, so that monotonicity of $m$ yields $m(\|u_{[k]}\|)\leq m(\nu\sqrt{k})$.

    For the first bound, we use that for any $x>0$ we can write  $\lceil x\rceil=\sum_{j\geq 1}\mathbf{1}_{x>j-1}$ and that $\pi/(4\arcsin s)-\tfrac12>j-1$ holds iff $s<\sin(\pi/(4j-2))$. Combining this, we get
    \begin{eqnarray}
        \sum_{k=1}^{d-1}m\big(\nu\sqrt{k}\big) &=&\sum_{j\geq 1}\Big|\Big\{k\in\{1,\ldots,d-1\}\Big|\;k\nu^2<\Big[\sin\frac{\pi}{4j-2}\Big]^2\Big\}\Big|\nonumber\\
        &\leq& (d-1)+\frac{1}{\nu^2}\underbrace{\sum_{j\geq 2} \Big[\sin\frac{\pi}{4j-2}\Big]^2}_{\leq\pi^2/18}\nonumber
    \end{eqnarray}
    In the inequality to the second line, we count every $k$ in the case of $j=1$ and for $j\geq 2$ we use that the number of $k$'s below any $y$ is at most $y$. The remaining sum can be bounded by $\pi^2/18$ by using that $\sin^2(x)\leq x^2-x^4/3+2x^6/45$ and summing up the resulting series\footnote{The inequality can be shown by integrating $\cos t\le 1$ six times over
$[0,x]$ and then using that $\sin^2(x)=(1-\cos 2x)/2$.}. Inserting this into the l.h.s. of \Cref{eq:nubounds} yields the first bound.

    To arrive at the second bound, we use $\lceil x\rceil\leq x+1$ and $\arcsin(s)\geq s$. So  
    \begin{eqnarray}
        \sum_{k=1}^{d-1}m\big(\nu\sqrt{k}\big) &\leq& \sum_{k=1}^{d-1}\left( \frac{\pi}{4\nu\sqrt{k}}+\frac12\right)\nonumber\\
        &=& \tfrac12(d-1)+\frac{\pi}{4\nu}\sum_{k=1}^{d-1}\frac{1}{\sqrt{k}}.\nonumber
    \end{eqnarray}
    Finally, the sum can be bounded via $\sum_{k=1}^{d-1}k^{-1/2}\leq 1+\int_1^{d-1}x^{-1/2} dx\leq 2\sqrt{d-1}$.
\end{proof}
From here, the smallest bound is obtained for unitaries that have a $Q$-column whose entries have equal moduli -- such as normalized complex Hadamard matrices or, more specifically, the discrete Fourier transform:
\begin{corollary}\label{cor:flat8}
    If $U\in U(d)$ has a $Q$-column $u\in\C^d$ s.t. $|u_i|=1/\sqrt{d}$ for all $i$, then 
    \begin{equation}
        L_{Q}(U)\leq \left(6+\frac{2\pi^2}{9}\right) d -5<8.2\, d-5.
    \end{equation}
\end{corollary}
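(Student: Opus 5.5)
We apply \Cref{thm:main1} with $u$ equal to the given flat $Q$-column. By the remark after \Cref{thm:main1}, every $R_u(\varphi)$ then lies in $\cT_{U,Q}$, so $c=1$. The bound becomes
\begin{equation*}
L_Q(U)\le 1+2\sum_{k=1}^{d-1}\Big[2m\big(\|u_{[k]}\|\big)+1\Big].
\end{equation*}
It remains to estimate this sum. Since all moduli equal $1/\sqrt d$, we have $\|u_{[k]}\|=\sqrt{k/d}$ exactly, so we may use the first alternative of \Cref{eq:nubounds} with $\nu=1/\sqrt d$. This gives
\begin{equation*}
\sum_{k=1}^{d-1}\Big[2m\big(\|u_{[k]}\|\big)+1\Big]\le 3(d-1)+\frac{\pi^2 d}{9}.
\end{equation*}
Inserting this yields
\begin{equation*}
L_Q(U)\le 1+6(d-1)+\frac{2\pi^2}{9}d=\Big(6+\frac{2\pi^2}{9}\Big)d-5.
\end{equation*}
Finally, $2\pi^2/9\approx 2.193<2.2$, which gives the strict inequality $L_Q(U)<8.2\,d-5$.

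The only point needing care is whether the first bound of \Cref{eq:nubounds} is the right one to invoke here. The second alternative gives $2(d-1)+\pi\sqrt{d(d-1)}$, which is asymptotically about $(2+\pi)d$ and hence worse than $(3+\pi^2/9)d$, so the first bound is indeed the relevant one.

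One should also check that the constant $\pi^2/18$ used inside that lemma's proof is what produces the $(\pi/(3\nu))^2$ term. It contributes $2\cdot\pi^2/(18\nu^2)=\pi^2 d/9$ after the factor $2$ in front of $m$. This is consistent with the formula above, so no further work is needed beyond this bookkeeping.
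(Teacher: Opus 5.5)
Your proposal is correct and is exactly the paper's argument. You apply \Cref{thm:main1} with $c=1$, since the flat column is a $Q$-column, and insert the first alternative of \Cref{eq:nubounds} with $\nu=1/\sqrt{d}$, which gives $1+2\bigl(3(d-1)+\pi^2 d/9\bigr)=\bigl(6+\tfrac{2\pi^2}{9}\bigr)d-5$; your side checks (the second alternative being asymptotically weaker, and $2\pi^2/9\approx 2.193<2.2$) are also accurate.
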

\begin{proof}
    Using the first bound in \Cref{eq:nubounds} and inserting it into \Cref{eq:LUbound} with $c=1$ proves the claim.
\end{proof}

If every $Q$-column of $U$ has a zero entry, then the bounds obtained from \Cref{thm:main1} with $c=1$ are void. One way to obtain a bound that is faithful in the sense that it is finite iff $L_{Q}(U)$ is, is to consider the coupling graph associated with $U$:

To each $U\in U(d)$ we associate a weighted bipartite graph $G$ whose $d+|Q|$ vertices are all the rows and $Q$-columns of $U$, with row $i$ connected to column $j\in Q$  iff $U_{ij}\neq 0$. The modulus $|U_{ij}|$ is then the weight assigned to this edge.
\begin{theorem}\label{thm:graph}
    For any $U\in U(d)$ we have:
    \begin{enumerate}[(i)]
        \item If $L_{Q}(U)<\infty$, then $G$ is connected.
        \item Conversely, if $G$ is connected, then $L_{Q}(U)<\infty$. More specifically, if $\mu>0$ is the minimal edge weight of a connected spanning subgraph of $G$, then
        \begin{equation}\label{eq:LUmu}
            L_{Q}(U)\leq 1+d(d-1)(2m(\mu)+1)\leq 1+\frac{3d (d-1)}{\mu}.
        \end{equation}
    \end{enumerate}
\end{theorem}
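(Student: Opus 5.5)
We prove (i) through an invariant subspace. For (ii) we factor any unitary into two-dimensional reflections along a spanning tree on the rows, and cost each one with \Cref{lem:ampamp}.

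\emph{Part (i).} We argue by contraposition. Suppose $G$ is disconnected. Every $Q$-column is a unit vector, so it has a non-zero entry and is adjacent to some row. Hence every connected component contains a row. Pick a component, and let $S\subseteq[d]$ be its rows and $J\subseteq Q$ its columns. Since $G$ has at least two components, $S$ is a proper non-empty subset of $[d]$.

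Put $P:=\Span\{e_i\,|\,i\in S\}$. This subspace is invariant under $\cT$. Every $V\in\cT_{U,Q}$ can be written as $V=\1+\sum_{j\in Q}(c_j-1)u_ju_j^*$, where $u_j=Ue_j$ and $|c_j|=1$. For $j\in J$ the column $u_j$ is supported on $S$, so $u_j\in P$. For $j\in Q\setminus J$ the column $u_j$ is supported on $[d]\setminus S$, so $u_j\perp P$. In both cases the term $u_ju_j^*$ maps $P$ into $P$, so $VP\subseteq P$.

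Therefore every word in $\cT\cup\cT_{U,Q}$ leaves $P$ invariant. Not every unitary does this, so $(\cT\cup\cT_{U,Q})^n\neq U(d)$ for all $n$, and $L_Q(U)=\infty$.

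\emph{Part (ii), reduction to two-dimensional reflections.} Fix a connected spanning subgraph of $G$ with all edge weights at least $\mu$. Build the \emph{row graph}: rows $i\neq i'$ are adjacent if some $Q$-column $j$ is joined to both in this subgraph, so that $|U_{ij}|,|U_{i'j}|\geq\mu$. The row graph is connected because the subgraph is; choose a spanning tree $\mathcal{R}$ of it.

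Now take a unit vector $v$ supported on an edge $\{i,i'\}$ of $\mathcal{R}$, with that edge witnessed by column $j$. Apply \Cref{lem:ampamp} with $u=u_j$. The constant $s$ satisfies $s\geq\min(|U_{ij}|,|U_{i'j}|)\geq\mu$, because $|v_i|,|v_{i'}|\leq1$ and the other entries of $v$ vanish. Since $R_{u_j}(\cdot)\in\cT_{U,Q}$ and $m$ is non-increasing, $R_v(\varphi)$ costs at most $2m(\mu)+1$ elements of $\cT_{U,Q}$.

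By \Cref{eq:Hhproduct} in dimension two, every unitary acting only on $\Span\{e_i,e_{i'}\}$ is a diagonal unitary times one such $R_v(\varphi)$. So such a \emph{tree-edge unitary} also costs at most $2m(\mu)+1$ elements of $\cT_{U,Q}$.

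\emph{Part (ii), the elimination.} This is the main step. Order the rows $r_1,\ldots,r_d$ so that $r_k$ is a leaf of the subtree $\mathcal{R}_k$ that $\mathcal{R}$ induces on $\{r_k,\ldots,r_d\}$. Repeatedly removing leaves keeps the remaining vertex set a tree, so such an ordering exists.

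At step $k$ the current unitary $W'$ acts as the identity on $\Span\{e_{r_1},\ldots,e_{r_{k-1}}\}$, up to phases, and hence preserves $\Span\{e_{r_k},\ldots,e_{r_d}\}$. Root $\mathcal{R}_k$ at $r_k$ and process its remaining vertices from the leaves toward the root. For each such vertex $w$, apply one tree-edge unitary on $\{w,\mathrm{parent}(w)\}$ that moves all amplitude of the current vector at $w$ onto the parent. After these $d-k$ operations, $W'e_{r_k}$ has been mapped to a phase times $e_{r_k}$.

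Because the product is unitary, the $r_k$-th row is then cleared as well, so the earlier rows stay fixed. After all $d$ steps only a diagonal matrix remains. The total number of tree-edge unitaries is $\sum_k(d-k)=d(d-1)/2$.

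Inverting this product, merging adjacent factors from the same group, and counting as in the proof of \Cref{thm:main1}, we obtain
\[
L_Q(U)\leq 1+2\cdot\tfrac{d(d-1)}{2}\,(2m(\mu)+1).
\]

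\emph{Part (ii), the final inequality.} It remains to show $2m(\mu)+1\leq 3/\mu$.
\begin{itemize}
\item If $\mu\geq\tfrac12$, then $m(\mu)\leq1$, so $2m(\mu)+1\leq3\leq3/\mu$.
\item If $\mu<\tfrac12$, use $\lceil x\rceil\leq x+1$ and $\arcsin s\geq s$ to get $2m(\mu)+1\leq\frac{\pi}{2\mu}+2$. This is at most $3/\mu$ exactly when $2\mu\leq 3-\pi/2\approx1.43$, which holds for $\mu<\tfrac12$.
\end{itemize}
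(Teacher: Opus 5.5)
Your proof is correct and follows essentially the same route as the paper. Part (i) uses the same invariant-subspace argument; part (ii) uses the same decomposition into $d(d-1)/2$ two-level unitaries on row pairs sharing a $Q$-column with entries of modulus at least $\mu$, each written as a diagonal unitary times one Householder reflection and costed via \Cref{lem:ampamp} with $u$ that shared column. The differences are minor: you prove the spanning-tree elimination yourself (leaf ordering plus leaves-to-root two-level steps) where the paper cites Lemma II.1 of Brennen, O'Leary and Bullock, and you get $2m(\mu)+1\le 3/\mu$ by a case split at $\mu=\tfrac12$ rather than from $\mu\le 1/\sqrt{2}$.
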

\begin{proof} \emph{(i) :}
         Suppose $G$ is disconnected. Let $I$ and $J$ be the sets of rows and columns of the component that contains the first row. Then $U$ maps $S_J:=\Span\{e_j|j\in J\}$ into $S_I:=\Span\{e_i|i\in I\}$ and disconnectedness of $G$ is reflected in the fact that $\dim{S_I}<d$. The subspace $S_I$ is clearly invariant under $\cT$, but also
under $\cT_{U,Q}$: every $V\in\cT_{U,Q}$ is of the form $V=UDU^*$ with $D\in\cT$ and
$D_{k,k}=1$ for $k\notin Q$, so that for all $x\in\mathbb{C}^d$
\[
  Vx-x=\sum_{j\in Q}\big(D_{j,j}-1\big)\langle u_j,x\rangle\,u_j .
\]
For $x\in S_I$, the terms with $j\in J$ lie in $S_I$ and those with
$j\in Q\setminus J$ vanish, so $Vx\in S_I$. Since not every unitary leaves a given proper subspace invariant, $(\cT\cT_{U,Q})^n\neq U(d)$ for all $n\in\N$, contradicting $L_{Q}(U)<\infty$. 

        \emph{(ii) :} Denote by $\Gamma$ a connected spanning subgraph of $G$ whose minimal edge weight is $\mu$. Lemma II.1 in \cite{Duniversality} together with connectedness of $\Gamma$ guarantees that every $W\in U(d)$ can be brought into diagonal form by multiplying with at most $K:=d(d-1)/2$ two-level unitaries, each of which couples a pair of rows that are linked by a column in $\Gamma$ (i.e., the rows  have distance $2$ in $\Gamma$). We will use the resulting decomposition of $W$ and now bound the `cost' of its factors:

        Let $B\in U(d)$ be a non-diagonal two-level building block of $W$. Then there are two levels $a\neq b$ s.t. $B$ and $B^*$ leave $S:=\Span\{e_a,e_b\}$ invariant and act as the identity on $S^\perp$. Moreover, there is a $j\in Q$ s.t. $|U_{aj}|,|U_{bj}|\geq\mu$. For the unit vector $v\propto (B e_a-e_a)\in S$, \Cref{lem:Rxy} guarantees the existence of a $\varphi$ s.t. $R_v(\varphi) e_a=Be_a$. Then $D:=R_v(\varphi)^* B$ acts as the identity on $S^\perp$ and leaves $e_a$ and thus $\Span\{e_b\}$ invariant. So $D\in\cT$ and $B=R_v(\varphi) D$. 

        If we set $u:=Ue_j$, then $|u_i|\geq \mu |v_i|$ for all $i$ so that by \Cref{lem:ampamp}, $R_v(\varphi)$ and therefore $B$ can be expressed as a product in $\cT\cup\cT_{U,Q}$ with at most $2m(\mu)+1$ uses of $\cT_{U,Q}$. Taking into account that there are at most $K$ of these factors yields the first inequality in \Cref{eq:LUmu}. For the second inequality note first that since at least two moduli of $u$ are bounded below by $\mu$, normalization of $u$ demands $\mu\leq1/\sqrt{2}$. The inequality then follows from $\arcsin \mu \geq\mu$ and $\lceil x\rceil\leq x+1$ via $2m(\mu)+1\leq \pi/(2\mu)+2\leq(\pi/2+\sqrt{2})/\mu<3/\mu$.
\end{proof}

In the remaining part of this section, we will bound the expectation value of $L(U)$ w.r.t. the Haar measure on $U(d)$.

\begin{theorem}\label{thm:HaarLU} Assume $|Q|>1$. With respect to the Haar measure on $U(d)$:
\begin{equation}\label{eq:linExbound}
    \Ex{L_{Q}(U)}\leq Cd-11,\qquad\text{with}\quad C:=12+18\pi\Gamma\big(\tfrac14\big)^{4/3}.
\end{equation}
\end{theorem}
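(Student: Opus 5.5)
The plan is to apply \Cref{thm:main1} with a unit vector $u$ built from two $Q$-columns of $U$. Its reflections should cost a bounded number $c$ of elements of $\cT_{U,Q}$, and its moduli should be much less likely to be small than those of a single Haar column. We then bound the expectation of the right-hand side of \Cref{eq:LUbound} directly. A single column is not enough. Its squared moduli behave like $E_i/d$ with $E_i$ i.i.d.\ exponential, so $\|u_{[k]}\|\approx k/(\sqrt2 d)$, and the sum $\sum_k 1/\|u_{[k]}\|$ is of order $d\log d$. Likewise, the crude second bound in \Cref{eq:nubounds} with $\nu=\min_i|u_i|\sim d^{-1}$ gives $d^{3/2}$. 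Neither gives linear scaling. With two columns $u_1=Ue_{j_1}$, $u_2=Ue_{j_2}$, $j_1,j_2\in Q$, the pair $(u_{1,i},u_{2,i})$ is approximately a two-dimensional complex Gaussian vector of variance $1/d$. Hence the ``best'' modulus one can extract from row $i$ satisfies $\Pr[\,\cdot\,^2\le t/d]\lesssim t^2$ rather than $\lesssim t$. This quadratic small-ball behaviour is what should make the relevant sums linear in $d$, and it is presumably where the $\Gamma(\tfrac14)$ in $C$ comes from.

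\emph{Step 1 (cost $c$).} Take $u:=U(a e_{j_1}+b e_{j_2})$ with $|a|^2+|b|^2=1$. Then $R_u(\varphi)=U R_{ae_{j_1}+be_{j_2}}(\varphi)U^*$ is a two-level unitary in the $U$-basis supported on the $Q$-coordinates $j_1,j_2$. I would show that every such two-level unitary lies in $(\cT_{U,Q}\cT)^{c}\cT_{U,Q}$ for an absolute constant $c$. This is a $d=2$ problem: rotations about two axes in $\mathbb R^3$, cf.\ \cite{Low71,Hamada2014}. Concretely, I would conjugate with $D\in\cT$ and use \Cref{lem:ampamp}, with $u_1$ and $u_2$ in the role of the cheap vector, inside the subspace $\Span\{u_1,u_2\}$. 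The mixing angle needed there can be kept away from $0$ by choosing $(a,b)$ randomly, e.g.\ $a=\cos\gamma$, $b=e^{i\theta}\sin\gamma$ with $\gamma$ in a fixed range. The price is a worse $c$ on a set of small probability, which is then controlled in expectation.

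\emph{Step 2 (moduli of $u$).} With $u_i=a u_{1,i}+b u_{2,i}$, choose $(a,b)$ (possibly depending on $U$, or averaged over a phase $\theta$) so that, for each $i$, $\Pr[|u_i|^2\le t/d]\le \kappa\, t^2$ up to the non-independence of Haar columns. I would handle that non-independence by the standard Gaussian representation: columns are Gram--Schmidt of i.i.d.\ Gaussian vectors, and the normalisation only rescales by factors concentrating at $\sqrt d$. Then
\[
\Ex{\sum_{k=1}^{d-1}\frac{1}{\|u_{[k]}\|}}\le C' d
\]
follows from order statistics. Writing $\|u_{[k]}\|^2\ge \frac{k}{d}\,\tau_{(\lceil k/2\rceil)}$, where $\tau_{(l)}$ is the $l$-th smallest of $d|u_i|^2$, the quadratic tail gives $\tau_{(l)}\gtrsim (l/d)^{1/2}$. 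So $1/\|u_{[k]}\|\lesssim d^{3/4}k^{-3/4}$, and summing over $k$ gives $O(d)$. The exact constant comes from integrals of the form $\int_0^\infty t^{-1/2}e^{-\kappa t^2}\,dt\propto\Gamma(\tfrac14)$, raised to the power $4/3$ after Hölder/Jensen. Since $2m(s)+1\le \pi/(2s)+2$, this bounds the expectation of \Cref{eq:LUbound} by $1+2c\,[2(d-1)+\tfrac{\pi}{2}C'd]$, which gives the form $Cd-11$.

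The main obstacle is Step 1 combined with Step 2: making $c$ a genuine absolute constant while keeping the quadratic small-ball estimate. Row-wise optimisation of $(a,b)$ is not allowed, since one fixed $u$ must serve all rows. I would try first a uniformly random phase $\theta$ with $|a|=|b|=1/\sqrt2$. For fixed $(u_{1,i},u_{2,i})$ with $|u_{1,i}|=|u_{2,i}|$, the modulus $|u_{1,i}+e^{i\theta}u_{2,i}|$ vanishes only at one $\theta$, which yields the $t^2$ tail after averaging over $U$. The two-level $U$-basis rotation with equal weights should then have $c$ bounded by the $d=2$ count at angle $\pi/2$.
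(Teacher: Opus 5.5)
There is a genuine gap, and it is exactly where you locate the main obstacle.

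In Step~2 your vector $u=U(ae_{j_1}+be_{j_2})$ is the image under $U$ of a unit vector. For any $(a,b)$ chosen independently of $U$ (in particular for your uniformly random phase $\theta$), right-invariance of the Haar measure makes $u$ distributed \emph{exactly} like a single column of $U$. Hence $\Pr[d|u_i|^2\le t]\approx t$ rather than $\lesssim t^2$, and the order statistics give back the $d\log d$ you computed yourself for one column. Averaging over $\theta$ therefore gains nothing. A $U$-dependent choice inside this two-parameter family only moves a few of the very smallest moduli. The logarithm comes from the bulk $k\gg\log d$, where the empirical law of $d|u_i|^2$ stays exponential uniformly over the family. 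Your remark that $|u_{1,i}+e^{i\theta}u_{2,i}|$ vanishes for only one $\theta$ is about a \emph{row-wise} choice of the mixing phase, which a single scalar combination cannot realise.

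Step~1 is not established either. $\cT$ does not leave $\Span\{u_1,u_2\}$ invariant, so there is no reduction to rotations about two axes in $\R^3$. Using \Cref{lem:ampamp} with $u_1$ as the cheap vector costs $2m(s)+1$ with $s=\min_i|u_{1,i}|/|u_i|$. For Haar $U$ this $s$ is typically of order $d^{-1/2}$, so this route gives $c\sim\sqrt d$, not an absolute constant.

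The missing idea is that the row-wise phase choice is available for free, because factors from $\cT$ are not counted. First align the phases of one $Q$-column with those of the other by a diagonal unitary: $x_i:=|u_{1,i}|\,u_{2,i}/|u_{2,i}|$, i.e.\ $x=Du_1$ with $D\in\cT$ (all entries are non-zero almost surely). Then mix with one reflection about the other column, $u:=R_{u_2}(\pi/3)x$. This gives $R_u(\varphi)=R_{u_2}(\pi/3)\,DR_{u_1}(\varphi)D^*\,R_{u_2}(\pi/3)^*$ with $R_{u_1}(\cdot),R_{u_2}(\cdot)\in\cT_{U,Q}$, so $c=3$. Since $x_i$ and $u_{2,i}$ have the same phase in every row, nothing can cancel:
$|u_i|^2=(|u_{1,i}|-\sigma|u_{2,i}|)^2+\sigma|u_{1,i}u_{2,i}|\ge\sigma|u_{1,i}u_{2,i}|$ with $\sigma:=\sum_j|u_{1,j}u_{2,j}|$.
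This is your geometric-mean heuristic, now holding deterministically in every row.

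The paper then proceeds as follows:
\begin{enumerate}[(i)]
\item H\"older with exponents $5/3,5/2$ gives $\|u_{[k]}\|^2\ge(k/d)^{5/3}\Theta^{-4/3}$ with $\Theta:=\frac1d\sum_i(d|u_{1,i}u_{2,i}|)^{-3/2}$.
\item Hence $L_Q(U)\le d(12+18\pi\Theta^{2/3})-11$.
\item Finally $\Ex{\Theta^{2/3}}\le\Ex{\Theta}^{2/3}\le\Gamma(\tfrac14)^{4/3}$. This uses the representation $g/\|g\|$ of a Haar row, the independence of $g/\|g\|$ and $\|g\|$, and Jensen; $\Gamma(\tfrac14)$ enters as $\Ex{|g_1|^{-3/2}}$ for a standard complex Gaussian $g_1$.
\end{enumerate}
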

\emph{Remarks:} One could use  $\Gamma(1/4)< 2\pi/\sqrt{3}$. The constant $C$ is not optimized.
\begin{proof} Let $a,b\in\C^d$ denote two distinct $Q$-column vectors of $U$ and define $$\Theta:=\frac1d\sum_{i=1}^d (d|a_i b_i|)^{-3/2}.$$ In the first part of the proof, we will construct a suitable vector $u$ from $a$ and $b$ that has `cost' $c=3$  in \Cref{thm:main1} and show that 
\begin{equation}\label{eq:Houtlineineq}
    L_{Q}(U)\leq 12d-11+3\pi\sum_{k=1}^{d-1}\|u_{[k]}\|^{-1}\leq d\big(12+18\pi\Theta^{2/3}\big)-11.
\end{equation}
Note that the first inequality is merely \Cref{eq:LUbound} for $c=3$ combined with the estimate $m(s)\leq\pi/(4s)+1/2$.
In the second part of the proof, we will then show that $\Ex{\Theta^{2/3}}$ can be bounded by a constant (i.e., independent of $d$) such that \Cref{eq:Houtlineineq} leads to \Cref{eq:linExbound}.\vspace*{3pt}

Using that the entries of $U$ are nonzero with probability one, we can define 
\begin{equation}
    u:=R_b(\pi/3)x,\quad \text{with}\quad x_i:=|a_i|\frac{b_i}{|b_i|}.
\end{equation}
Since $b$ is a $Q$-column of $U$, we have $R_b(\pi/3)\in\cT_{U,Q}$ and as the coordinates of $x$ have the same moduli as those of $a$, which is also a $Q$-column of $U$, we know that $R_x(\varphi)\in \cT\cT_{U,Q}\cT$. So $R_u(\varphi)=R_b(\pi/3)R_x(\varphi)R_b(\pi/3)^*$ is a product of elements of $\cT_{U,Q}\cup\cT$ that uses at most $c=3$ elements of $\cT_{U,Q}$. Hence, the first inequality in \Cref{eq:Houtlineineq} is justified. Direct expansion of $u_i$ leads to 
\begin{eqnarray}
    |u_i|^2 &=& \big(|a_i|-\langle b,x\rangle |b_i|\big)^2 + \langle b,x\rangle |a_i b_i|\nonumber \\
    &\geq& \langle b,x\rangle |a_i b_i|\quad\text{with}\quad \langle b,x\rangle=\sum_i|a_i b_i|.\label{eq:bxsum}
\end{eqnarray}
    Regarding $\langle b,x\rangle$ as something that is roughly one,  \Cref{eq:bxsum} means that we bound $|u_i|$ essentially by the geometric mean of the moduli of two  entries in $U$.

    Next, we work towards an upper bound on the sum in \Cref{eq:Houtlineineq}. For any subset $I\subseteq \{1,\ldots, d\}$ of $k$ elements, we can use H\"older's inequality with exponents $5/3$ and $5/2$ to obtain
\begin{eqnarray*}
            k = \sum_{i \in I} |a_ib_i|^{3/5} |a_ib_i|^{-3/5} &\leq & \left(\sum_{i \in I} |a_ib_i|\right)^{3/5} \left(\sum_i |a_ib_i|^{-3/2}\right)^{2/5}\\
            \text{so that }\ \sum_{i\in I}|a_i b_i| &\geq& k^{5/3}\bigg(\sum_{i=1}^d |a_i b_i|^{-3/2}\bigg)^{-2/3}\hspace{-4pt}=\bigg(\frac{k}{d}\bigg)^{5/3}\Theta^{-2/3}.
        \end{eqnarray*}
Applying this twice, with \Cref{eq:bxsum}, and choosing $I$ accordingly,  yields
\begin{equation*}
              \|u_{[k]}\|^2 = \sum_{i \in I} |u_i|^2 \geq \bigg(\sum_{i=1}^d|a_ib_i|\bigg)\bigg(\sum_{i \in I} |a_ib_i|\bigg) \geq \bigg(\frac{k}{d}\bigg)^{5/3}\Theta^{-4/3}.
\end{equation*}
Taking the square root of the inverse,  we can finally bound the desired sum
\begin{equation}\label{eq:sumukinvtheta}
    \sum_{k=1}^{d-1}{\|u_{[k]}\|^{-1}} \leq d^{5/6} \Theta^{2/3}\sum_{k=1}^{d-1} k^{-5/6}\leq 6d\Theta^{2/3},
\end{equation}
where the last step follows from the integral bound $\sum_{k=1}^{d-1} k^{-5/6}\leq \int_0^{d-1} x^{-5/6} dx\leq 6d^{1/6}$. Using \Cref{eq:sumukinvtheta} then completes the proof of \Cref{eq:Houtlineineq}.

Now we turn to the probabilistic part. Taking the expectation value in \Cref{eq:Houtlineineq} reduces to computing $\Ex{\Theta^{2/3}}$, which can be bounded from above by $ \Ex{\Theta}^{2/3}$ using Jensen's inequality. Furthermore, as marginal distributions of all row vectors in $U$ are the same due to the invariance of the Haar measure under multiplying with a fixed permutation matrix, we have 
\begin{equation}\label{eq:Thetaa1b1}
    \Ex{\Theta}=d^{-3/2}\Ex{|a_1b_1|^{-3/2}}.
\end{equation}
If $g=(g_1,\ldots,g_d)$ is a vector of independent standard complex Gaussian random variables\footnote{That is, each $g_i$ has density $\pi^{-1}e^{-|z|^2}$ on $\C$ and $|g_i|^2$ has density $e^{-x}$ for $x>0$.} and $r:=\|g\|^2$, then $w:=g/\sqrt{r}$ has the same distribution as any row of a Haar unitary (cf. \cite[Sec.3]{Mastrodonato2007}). Moreover, $w$ and $r$ are independent (\cite[p.38]{multivariatestat}) so that, in particular,
\begin{equation}
    \Ex{r^{-3/2}}\Ex{|w_1 w_2|^{-3/2}}=\Ex{|g_1 g_2|^{-3/2}}=\Ex{|g_1|^{-3/2}}^2.
\end{equation}
Identifying w.l.o.g. $a_1=w_1$ and $b_1=w_2$, we can bound \Cref{eq:Thetaa1b1}  using
\begin{eqnarray*}
    \Ex{|g_1|^{-{3/2}}} &=&\int_0^\infty x^{-3/4} e^{-x} dx\ =\ \Gamma\big(\tfrac14\big),\quad\text{and}\\
    \Ex{r^{-3/2}}&\geq& \Ex{r}^{-3/2}=\Big(\sum_{i=1}^d \Ex{|g_i|^2}\Big)^{-3/2}=\ d^{-3/2},
\end{eqnarray*} where the inequality is again Jensen's. Combining pieces, we get
\begin{equation}\label{eq:ThetaGamma}
\Ex{\Theta^{2/3}}\leq\Ex{\Theta}^{2/3}\leq\Gamma\big(\tfrac14\big)^{4/3}. 
\end{equation}
Now we can take the expectation value in \Cref{eq:Houtlineineq}, apply \Cref{eq:ThetaGamma}, and finally obtain \Cref{eq:linExbound}.
\end{proof}

\section{Non-generic spectra and bounded times}\label{sec:nongeneric}

In this section, we are going to relax two assumptions that we made in order to reduce the Hamiltonian universality problem to the alternating diagonal unitary problem: the assumptions of generic spectra and unbounded time-intervals. We  do not attempt to address the resulting problem in general, but rather analyze what the results obtained so far imply in this context. The overall idea is the following: if the eigenvalues of the Hamiltonians have linear dependencies over $\Q$ and/or we impose a bound on how large the times $t_1,\ldots,t_n$ in \Cref{eq:word} are allowed to be, then the two tori $\cT$ and $\cT_{U,Q}$ may have to be replaced with smaller, reachable sets.  The differences between the full tori and their replacements combined with the derived switching-number bounds then allow us to derive approximation results with  error estimates.   

Throughout this section, we consider $H_1=\operatorname{diag}(\lambda)$ and $H_2=U\operatorname{diag}(\eta)U^*$ with arbitrary $\lambda,\eta\in\R^d$ and $U\in U(d)$. Let $\mathcal{W}_n$ denote the products in \Cref{eq:word} with non-negative evolution times and either Hamiltonian first. We first describe the orbit closures and obtain approximation bounds for non-generic spectra. Then we incorporate bounded evolution times.\vspace*{5pt}

\paragraph{\bf Orbit closures and approximation}
The integer relations among the eigenvalues determine the orbit closures. For $x\in\R^d$, define
\begin{align*}
  D_x&:=\operatorname{diag}(e^{2\pi ix_1},\ldots,e^{2\pi ix_d}),\\
  \Lambda(\lambda)&:=\{k\in\Z^d\,|\,\langle k,\lambda\rangle=0\},\\
  \cT(\lambda)&:=\{D_x\,|\,x\in\R^d,\ \langle k,x\rangle\in\Z\ \forall k\in\Lambda(\lambda)\}.
\end{align*}

\begin{lemma}\label{lem:closure}
  For every $\lambda\in\R^d$, the closure of $\{e^{it\operatorname{diag}(\lambda)}\,|\,t\geq0\}$ is $\cT(\lambda)$.
\end{lemma}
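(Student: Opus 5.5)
We identify the diagonal unitary torus with $\R^d/\Z^d$ via $x\mapsto D_x$. Then $e^{it\operatorname{diag}(\lambda)}=D_{t\lambda/(2\pi)}$, so the claim is about the closure of the half-line $\{t\lambda/(2\pi)\,|\,t\geq 0\}$ mod $\Z^d$. It is a Kronecker-type statement. The plan has three steps: show that $\cT(\lambda)$ is a closed subgroup containing the orbit, show that the closure of the full-line orbit is a closed subgroup that cannot be smaller than $\cT(\lambda)$, and then pass from the full line to the half-line.

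\emph{Step 1 (inclusion).} $\cT(\lambda)$ is a closed subgroup of the torus, being an intersection of kernels of the continuous characters $\chi_k(D_x)=e^{2\pi i\langle k,x\rangle}$ for $k\in\Lambda(\lambda)$. For $x=t\lambda/(2\pi)$ we have $\langle k,x\rangle=0$ for every $k\in\Lambda(\lambda)$, so the orbit lies in $\cT(\lambda)$. Hence its closure does too.

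\emph{Step 2 (reverse inclusion for the full line).} Let $H$ be the closure of $\{e^{it\operatorname{diag}(\lambda)}\,|\,t\in\R\}$; it is a closed subgroup of the torus $\cT$. By Pontryagin duality for the compact abelian group $\cT\cong\R^d/\Z^d$, every closed subgroup equals the common kernel of the characters that vanish on it:
\[
H=\{D_x\,|\,\chi_k(D_x)=1\ \forall k \text{ with } \chi_k|_H\equiv1\}.
\]
Every character is of the form $\chi_k$ with $k\in\Z^d$. Now $\chi_k$ is trivial on $H$ iff $e^{it\langle k,\lambda\rangle}=1$ for all $t\in\R$, which holds iff $\langle k,\lambda\rangle=0$, i.e.\ $k\in\Lambda(\lambda)$. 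Therefore $H=\cT(\lambda)$. To keep things elementary, one can instead cite the duality fact in the form ``a proper closed subgroup of a torus is annihilated by some nontrivial character'' and apply it to the quotient $\cT(\lambda)/H$, which is itself a compact abelian Lie group.

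\emph{Step 3 (half-line).} Let $H_+$ be the closure of $\{e^{it\operatorname{diag}(\lambda)}\,|\,t\geq0\}$; it is a closed subsemigroup of the compact group $\cT$. A closed subsemigroup of a compact group is a group. To see this, take $g\in H_+$. The powers $g^n$ have an accumulation point by compactness, so $g^{n_j-n_i}\to\1$ along suitable subsequences. Hence $\1\in H_+$ and $g^{-1}=\lim g^{n_j-n_i-1}\in H_+$. More concretely, for the orbit itself: by compactness there are $t_j\to\infty$ with $e^{it_j\operatorname{diag}(\lambda)}\to\1$. Then for $s>0$, $e^{-is\operatorname{diag}(\lambda)}=\lim e^{i(t_j-s)\operatorname{diag}(\lambda)}$ with $t_j-s\geq0$ eventually, so $e^{-is\operatorname{diag}(\lambda)}\in H_+$. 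Thus $H_+$ contains the full-line orbit, so $H_+=H=\cT(\lambda)$.

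The main obstacle is Step 2: justifying that a closed subgroup is cut out by the characters annihilating it. I would invoke the standard structure theorem (closed subgroups of tori and their annihilators), as found for instance in treatments of Kronecker's theorem such as \cite[Ch.~XXIII]{HardyWright}. Alternatively, one can reduce to the $\Q$-independent case by choosing a $\Z$-basis adapted to the saturated lattice $\Lambda(\lambda)$ and applying Kronecker's theorem in the complementary coordinates.
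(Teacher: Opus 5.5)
Your proof is correct and has the same skeleton as the paper's. First identify the closure of the full-line orbit with $\cT(\lambda)$. Then recover negative times from a sequence $t_j\to\infty$ with $e^{it_j\operatorname{diag}(\lambda)}\to\1$.

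Your concrete version of Step~3 is exactly the paper's argument. The paper obtains the $t_j$ as gaps $m_{j+1}-m_j$ of a convergent integer-time subsequence, thinned so that the gaps diverge; this is the detail your ``by compactness'' hides.

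The real difference is Step~2.
\begin{itemize}
\item The paper cites a version of Kronecker's theorem with integer relations (Fischer's Theorem~1). It states directly that $D_x$ lies in the full-line closure iff $\langle k,x\rangle\in\Z$ for all $k\in\Lambda(\lambda)$, so both inclusions come from one citation.
\item You compute the annihilator $H^\perp=\{\chi_k\,|\,k\in\Lambda(\lambda)\}$ and invoke the double-annihilator theorem $H=(H^\perp)^\perp$ for closed subgroups of compact abelian groups.
\end{itemize}
Your route is more conceptual: it shows structurally why integer relations are the only possible obstruction, and your Step~1 makes the easy inclusion explicit. The cost is importing Pontryagin duality (or Peter--Weyl) in place of a single number-theoretic reference.

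Two minor remarks.
\begin{itemize}
\item Your ``elementary'' variant does not actually avoid duality. A nontrivial character of $\cT(\lambda)/H$ only gives a character of $\cT(\lambda)$. To reach a contradiction you still need it to extend to some $\chi_k$ on the whole torus. This is true, for instance because $\Lambda(\lambda)$ is saturated, so $\cT(\lambda)$ is a subtorus and a direct factor, but it has to be said.
\item The Hardy--Wright chapter proves Kronecker's theorem for $\Q$-independent frequencies, not the annihilator theorem. If you want to rest on that reference, use your last alternative, the $\Z$-basis adapted to $\Lambda(\lambda)$. Since the continuous form of Kronecker's theorem there already gives arbitrarily large positive $t$, that route even makes Step~3 unnecessary.
\end{itemize}
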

\begin{proof}
  The version of Kronecker's theorem in \cite[Thm.~1]{Fischer1998}, applied with  $\Theta_j=2\pi x_j$, shows that $D_x$ belongs to the closure of $\{e^{it\operatorname{diag}(\lambda)}\,|\,t\in\R\}$ iff $\langle k,x\rangle\in\Z$ for every $k\in\Lambda(\lambda)$. Hence, this closure equals $\cT(\lambda)$.  To restrict to non-negative times, use sequential compactness to choose integers $m_1<m_2<\cdots$ such that $e^{im_j\operatorname{diag}(\lambda)}$ converges and $t_j:=m_{j+1}-m_j\to\infty$. Then $e^{it_j\operatorname{diag}(\lambda)}\to\1$, so for every $s\geq0$,
  \[
    e^{-is\operatorname{diag}(\lambda)}=\lim_j e^{i(t_j-s)\operatorname{diag}(\lambda)}.
  \]
  Since $t_j-s\geq0$ for all sufficiently large $j$, the two orbit closures coincide.
\end{proof}
From this we see that the two relevant tori become $\cT(\lambda)$ and $U\cT(\eta)U^*$. Combining this with the reasoning used in the proof of \Cref{prop:NL} we obtain  
\begin{equation}\label{eq:ovWn}
    \overline{\mathcal{W}_n}=\big(\cT(\lambda)\cup U\cT(\eta)U^*\big)^n.
\end{equation}
To obtain approximation bounds, we measure how far each orbit closure is from the corresponding full torus. With $\|\cdot\|$ denoting the operator norm, set
\[
  \Delta_Q(\lambda):=\max_{D\in\cT_{\1,Q}}\ \min_{D'\in\cT(\lambda)}\|D-D'\|,
  \qquad \Delta(\lambda):=\Delta_{[d]}(\lambda).
\]
In particular, $\Delta(\lambda)$ is the Hausdorff distance between $\cT(\lambda)$ and $\cT$, and vanishes iff $\Lambda(\lambda)=\{0\}$.

\begin{theorem}\label{thm:approx}
  Let $Q\subseteq[d]$ and let $\ell\in\N$ satisfy $\ell\geq L_Q(U)$. Then for all $n\geq\ell$ and $W\in U(d)$,
  \begin{equation}\label{eq:approx}
    \inf_{V\in\mathcal{W}_n}\|W-V\|\;\leq\;\Big\lceil\frac\ell2\Big\rceil\big(\Delta(\lambda)+\Delta_Q(\eta)\big).
  \end{equation}
\end{theorem}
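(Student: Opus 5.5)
Since $\ell\geq L_Q(U)$, the target $W$ can be written exactly as a word of length at most $\ell$ in the alphabet $\cT\cup\cT_{U,Q}$. The plan is to replace each letter by a nearby element of the corresponding reachable torus, $\cT(\lambda)$ or $U\cT(\eta)U^*$, and then use \Cref{eq:ovWn} to return from the closure to $\mathcal{W}_n$ itself.

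First we fix the form of the word. As noted after \Cref{eq:LUdef}, a shortest representation alternates between the two groups; inserting identities if necessary, we may write
$W=A_1B_1A_2B_2\cdots$, with $A_j\in\cT$ and $B_j\in\cT_{U,Q}$, and at most $\ell$ factors in total. If we choose the letter to start with appropriately (for instance $W\in(\cT\cT_{U,Q})^{N}\cT$ as in the proof of \Cref{thm:main1}), each of the two letter types occurs at most $\lceil \ell/2\rceil$ times.

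Next, for each $A_j\in\cT=\cT_{\1,[d]}$ the definition of $\Delta(\lambda)$ gives some $A_j'\in\cT(\lambda)$ with $\|A_j-A_j'\|\leq\Delta(\lambda)$. For each $B_j=UD_jU^*$ we have $D_j\in\cT_{\1,Q}$, so there is $D_j'\in\cT(\eta)$ with $\|D_j-D_j'\|\leq\Delta_Q(\eta)$. Setting $B_j':=UD_j'U^*$, unitary invariance of the operator norm gives $\|B_j-B_j'\|\leq\Delta_Q(\eta)$. All maxima and minima in these definitions are attained by compactness.

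Then we use the standard telescoping estimate for products of unitaries,
$\|X_1\cdots X_r-Y_1\cdots Y_r\|\leq\sum_i\|X_i-Y_i\|$.
This holds because all factors have norm one. It gives $\|W-V'\|\leq\lceil\ell/2\rceil(\Delta(\lambda)+\Delta_Q(\eta))$ for the product $V'$ of the primed factors. By \Cref{eq:ovWn}, $V'\in\overline{\mathcal{W}_{\ell}}$, and $\mathcal{W}_\ell\subseteq\mathcal{W}_n$ for $n\geq\ell$ because we may pad with zero times. Hence the infimum over $\mathcal{W}_n$ is bounded by the same quantity, as claimed in \Cref{eq:approx}.

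The main point to get right is the counting: each letter type must appear at most $\lceil\ell/2\rceil$ times. If the length is odd, the surplus letter must belong to the right type, or else each type must be bounded separately. I would handle this by observing that both $\cT$ and $\cT_{U,Q}$ contain $\1$, so the word can be padded at either end to a length-$\ell$ alternating word starting with either letter. In an alternating word of length $\ell$, each type occurs at most $\lceil\ell/2\rceil$ times whichever letter comes first. Padding with $\1$ costs nothing in the approximation, since $\1$ lies in both reachable tori as well.
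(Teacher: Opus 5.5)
Your proposal is correct and follows the paper's proof essentially step by step. You write $W$ as an alternating word of length $\ell$ (padding with identities), replace each factor by a nearest element of $\cT(\lambda)$ or $U\cT(\eta)U^*$, sum the factor errors, and use \Cref{eq:ovWn} together with zero-time padding to pass from $\overline{\mathcal{W}_\ell}$ to $\mathcal{W}_n$; you merely make explicit the details the paper leaves implicit, namely unitary invariance for the $\cT_{U,Q}$ factors, the telescoping inequality, and the $\lceil\ell/2\rceil$ letter count.
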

\begin{proof}
  Write $W=W_1\cdots W_\ell$ with factors alternating between $\cT$ and $\cT_{U,Q}$, inserting identities if needed. Replace each factor by a closest element of $\cT(\lambda)$ or $U\cT(\eta)U^*$, respectively. By \Cref{eq:ovWn}, the resulting product lies in $\overline{\mathcal{W}_\ell}$. Each type of factor occurs at most $\lceil\ell/2\rceil$ times, and the product error is at most the sum of the factor errors, which proves the bound for $n=\ell$. The remaining cases follow by adding identity factors.
\end{proof}

Since the operator norm distance between two unitaries is at most $2$, short integer relations can make this bound void. Indeed, for any $m\in\Lambda(\lambda)\setminus\{0\}$:
\begin{equation}\label{eq:Deltasinbound}
\Delta(\lambda)\geq2\sin\!\left(\frac{\pi}{2\|m\|_1}\right).
\end{equation}
To see this, choose $x_i=\operatorname{sgn}(m_i)/(2\|m\|_1)$. For every $D_y\in\cT(\lambda)$, we have $\langle m,x-y\rangle\in\frac12+\Z$, so some coordinate of $x-y$ has distance at least $1/(2\|m\|_1)$ from the nearest integer. We can then obtain a lower bound on $\Delta(\lambda)$ by using \[\|D_x-D_y\|=\max_{j}\big|e^{2\pi i x_j}-e^{2\pi i y_j}\big|=\max_{j} 2\big|\sin\big(\pi(x_j-y_j)\big)\big| .\] Here, only the distance of $|x_j-y_j|$ from $\Z$ is relevant, which then leads to the bound in \Cref{eq:Deltasinbound}.
Since $\|W-V\|\leq2$, \Cref{eq:approx} can be nontrivial only if every non-zero relation among $\lambda$ has $\ell_1$-norm larger than $\lceil\ell/2\rceil$.
A similar argument applies to $\Delta_Q(\eta)$. In particular, repeated eigenvalues of $H_1$ and repeated non-zero eigenvalues of $H_2$ lead to a trivial bound.\vspace*{5pt}

\paragraph{\bf Bounded evolution times}
Bounds on the evolution times require a quantitative measure of the separation from integer relations. For $a\in\R^r$ with $r\geq1$ and a positive integer $M$, define
\[
  \delta_M(a):=\min_{\substack{m\in\Z^r\\0<\|m\|_\infty\leq M}}|\langle m,a\rangle|.
\]
Thus, $\delta_M(a)>0$ means that $a$ has no non-zero integer relation of height at most $M$. The next theorem combines this spectral information with the word length bounds from \Cref{sec:altdiag}.

\begin{theorem}[Finite-time approximation]\label{thm:finite-time}
  Let $Q:=\{k\,|\,\eta_k\neq0\}$ be non-empty, set $\eta_Q:=(\eta_k)_{k\in Q}$, and let $\ell\in\N$ satisfy $\ell\geq L_Q(U)$. Suppose that $\delta_M(\lambda)>0$ and $\delta_{M'}(\eta_Q)>0$ for positive integers $M,M'$. For $T\geq2\pi/\delta_M(\lambda)$ and $T'\geq2\pi/\delta_{M'}(\eta_Q)$, every $W\in U(d)$ admits a product $V\in\mathcal{W}_\ell$ whose $H_1$- and $H_2$-evolution times lie in $[0,T]$ and $[0,T']$, respectively, such that
  \begin{equation}\label{eq:finite-time}
    \|W-V\|\;\leq\;2\pi\Big\lceil\frac\ell2\Big\rceil
    \left(\sqrt{\frac{\pi^2d}{16M^2}+\frac{\pi d}{T\delta_M(\lambda)}}
    +\sqrt{\frac{\pi^2|Q|}{16(M')^2}+\frac{\pi|Q|}{T'\delta_{M'}(\eta_Q)}}\right).
  \end{equation}
  The total evolution time is at most $\lceil\ell/2\rceil(T+T')$.
\end{theorem}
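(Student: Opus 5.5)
The plan is to run the argument of \Cref{thm:approx} with the orbit closures replaced by finite-time orbit segments. By $\ell\geq L_Q(U)$, write $W=W_1\cdots W_\ell$ with factors alternating between $\cT$ and $\cT_{U,Q}$, inserting identities where needed, so that each type occurs at most $\lceil\ell/2\rceil$ times. For each $\cT$-factor $D_x$ I would find a time $t\in[0,T]$ with $\|D_x-e^{it\operatorname{diag}(\lambda)}\|$ bounded by the first summand of \Cref{eq:finite-time} divided by $\lceil\ell/2\rceil$. For each $\cT_{U,Q}$-factor $U D_y U^*$, where $y_k=0$ for $k\notin Q$, I would do the same with $\eta$ and $T'$. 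Since the coordinates outside $Q$ have $\eta_k=0$, they stay fixed at $1$, so only the $|Q|$ coordinates of $\eta_Q$ matter, which explains the $|Q|$ in the second summand. Telescoping the product gives the claimed error, as the error of a product of unitaries is at most the sum of the factor errors. The total-time bound is then immediate.

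The core step is a quantitative Kronecker theorem for a single factor. Fix $x\in\R^d$ and write $\theta_j(t):=t\lambda_j/(2\pi)-x_j$. Using $\|D_a-D_b\|=\max_j 2|\sin\pi(a_j-b_j)|\leq 2\pi\max_j\|\theta_j\|_{\R/\Z}\leq 2\pi\big(\sum_j\|\theta_j(t)\|^2_{\R/\Z}\big)^{1/2}$, it suffices to exhibit some $t\in[0,T]$ with
\[
\sum_j\|\theta_j(t)\|^2_{\R/\Z}\leq \frac{\pi^2 d}{16M^2}+\frac{\pi d}{T\delta_M(\lambda)}.
\]
To do this I would average with a nonnegative weight. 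Let $K_M(\theta)=\sum_{|m|\leq M}c_m e^{2\pi i m\theta}\geq0$ be a Fejér/Jackson-type kernel with $c_0=1$, $|c_m|\leq1$, concentrated near $0$, chosen so that $\int_0^1 K_M(\theta)\|\theta\|^2\,d\theta\leq \pi^2/(16M^2)$. I expect the constant to come out like that for an optimal Jackson/Fejér-squared kernel, and I would tune the kernel until it does. Then consider the probability-like weight $w(t)\propto\prod_j K_M(\theta_j(t))$ on $[0,T]$ and bound the $w$-average of $\sum_j\|\theta_j(t)\|^2$ from above.

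Expanding the product in Fourier modes gives a sum over $m\in\Z^d$ with $\|m\|_\infty\leq M$. The $m=0$ term produces the main term $d\cdot\pi^2/(16M^2)$ after normalizing, where each coordinate contributes one copy. Every $m\neq0$ contributes a factor $\frac1T\int_0^T e^{it\langle m,\lambda\rangle}dt$, whose modulus is at most $2/(T|\langle m,\lambda\rangle|)\leq 2/(T\delta_M(\lambda))$. The hypothesis $T\geq 2\pi/\delta_M$ ensures that the normalization $\frac1T\int_0^T\prod_j K_M$ stays bounded away from $0$ (at least a constant), so dividing by it is harmless. A positive average then implies that some $t\in[0,T]$ achieves at most the average, and $t\geq0$ is automatic.

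The main obstacle is controlling the nonzero-frequency error with the stated $\pi d/(T\delta)$ rather than with a factor exponential in $d$ coming from the number of modes $(2M+1)^d$. To avoid this I would not take a full product kernel. Instead, for each coordinate $j$ separately I would bound the time average of a single-variable nonnegative trigonometric majorant/minorant of $\|\theta_j\|^2$ of degree $M$. Its nonzero frequencies involve only $m=\pm k e_j$-type combinations, or pairs as needed, so that the error per coordinate is $O(1/(T\delta_M))$ with total Fourier mass about $\pi/2$. Summing over $j$ yields $\pi d/(T\delta_M)$, and averaging over $t$ still gives a single good $t$ because the sum is averaged jointly. If per-coordinate minorants of the right type prove awkward, I would fall back on a Beurling–Selberg-type extremal function for the distance, accepting slightly different constants.
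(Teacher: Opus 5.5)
Your outer reduction is the paper's. You write $W=W_1\cdots W_\ell$ with alternating factors and approximate each factor by a finite-time evolution; coordinates outside $Q$ stay at $1$ because $\eta_k=0$ there. You then convert via $|e^{2\pi iu}-e^{2\pi iv}|\le 2\pi\|u-v\|_{\R/\Z}$ and add at most $\lceil\ell/2\rceil$ errors of each type.

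The gap is the single-factor quantitative Kronecker estimate. The paper does not prove it; it imports it from Chen's theorem in the form of Gonek--Montgomery, Thm.~5.1 (frequencies $a_i/(2\pi)$, unit weights, heights $M$, separation $\delta_M(a)/(2\pi)$). Your derivation of it does not work, for three reasons.

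\emph{The product kernel.} With the weight $\prod_j K_M(\theta_j(t))$ and the termwise bound $2/(T\delta_M)$, the nonzero modes contribute up to $\frac{2}{T\delta_M}\sum_{m\neq0}|c_m|$. Here $\sum_m|c_m|=\prod_j\sum_k|c^{(j)}_k|$ is exponential in $d$; for Fej\'er kernels it is $(M+1)^d$, and already $M+1$ when $d=1$. So, contrary to your claim, $T\geq 2\pi/\delta_M$ does \emph{not} keep the normalization away from $0$.

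\emph{The per-coordinate remedy.} This fails structurally. The time average over $[0,T]$ of a function of $\theta_j(t)$ alone is close to its mean over the circle. Every majorant of $\|\theta\|_{\R/\Z}^2$ has mean at least $\int_0^1\|\theta\|_{\R/\Z}^2\,d\theta=1/12$, so you get roughly $d/12$ rather than $\pi^2 d/(16M^2)$. Smallness requires a single weight that forces all coordinates near their targets \emph{simultaneously}. Averages computed with different weights for different $j$ cannot be merged into one good $t$.

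\emph{A Beurling--Selberg weight in $t$.} This does not help either. It annihilates the nonzero modes only if its Fourier transform is supported in $(-\delta_M,\delta_M)$, which rules out compact support in $t$. Controlling the weight outside $[0,T]$ then again costs $\sup\prod_jK_M$, which is exponential in $d$.

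What is missing is an $L^2$ mean-value estimate whose error does not depend on the number of frequencies. For instance, weight by $|P(t)|^2$, where $P(t)=\prod_j p_j(\theta_j(t))=\sum_m c_m e^{2\pi i\langle m,\theta(t)\rangle}$ and $\deg p_j\le M-1$.
\begin{itemize}
\item The time frequencies $\langle m,\lambda\rangle$ occurring in $P$ and in $P\,(e^{2\pi i\theta_j}-1)$ are pairwise $\delta_M(\lambda)$-separated. Their differences are $\langle m-m',\lambda\rangle$ with $0<\|m-m'\|_\infty\le M$.
\item The Montgomery--Vaughan Hilbert inequality then gives $\int_0^T|\sum_m b_m e^{it\mu_m}|^2\,dt=(T+O(1/\delta_M))\sum_m|b_m|^2$ for such sums.
\item Hence the $|P|^2$-weighted time average of $\frac14\sum_j\sin^2(\pi\theta_j)\ge\sum_j\|\theta_j\|_{\R/\Z}^2$ is at most a factor $1+O(1/(T\delta_M))$ times the corresponding average over $\R^d/\Z^d$, and the latter splits over $j$.
\item The one-variable extremal value $\min\{\int_0^1|p|^2\sin^2(\pi\theta)\,d\theta:\ \deg p\le M-1,\ \|p\|_2=1\}=\sin^2(\pi/(2M+2))$ then produces the $\pi^2 d/(16M^2)$ main term.
\end{itemize}
The total bound is therefore linear in $d$. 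This gives an estimate of the required type, with different constants and range of $T$. To obtain exactly the stated bound, invoke Chen's theorem as the paper does; with that in place, your proof coincides with the paper's.
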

\begin{proof}
  We first approximate a single diagonal unitary. Let $a\in\R^r$, $\delta_M(a)>0$, and $D=D_x\in U(r)$, using the notation $D_x$ also in dimension $r$. Write $\|u\|_{\R/\Z}:=\min_{j\in\Z}|u-j|$. Chen's quantitative Kronecker theorem \cite{Che00}, in the form of \cite[Thm.~5.1]{GM16}, applies with frequencies $a_i/(2\pi)$, weights $1$, and height bounds $M$. The separation parameter is $\delta_M(a)/(2\pi)$, so for every $T\geq2\pi/\delta_M(a)$ there is a $t\in[0,T]$ with
  \[
    \sum_{i=1}^r\Big\|\frac{ta_i}{2\pi}-x_i\Big\|_{\R/\Z}^2
    \;\leq\;\frac{\pi^2r}{16M^2}+\frac{\pi r}{T\delta_M(a)}.
  \]
  The inequality $|e^{2\pi iu}-e^{2\pi iv}|\leq2\pi\|u-v\|_{\R/\Z}$ gives
  \begin{equation}\label{eq:spectral-time}
    \|D-e^{it\operatorname{diag}(a)}\|\;\leq\;
    2\pi\sqrt{\frac{\pi^2r}{16M^2}+\frac{\pi r}{T\delta_M(a)}}.
  \end{equation}
  Now write $W=W_1\cdots W_\ell$ with factors alternating between $\cT$ and $\cT_{U,Q}$. Apply \Cref{eq:spectral-time} to the first type with $a=\lambda, r=d$ and to the second type, after conjugating by $U^*$ and restricting to $Q$, with $a=\eta_Q, r=|Q|$. The entries outside $Q$ remain equal to $1$. Each type occurs at most $\lceil\ell/2\rceil$ times. Summing the factor errors as in \Cref{thm:approx} proves \Cref{eq:finite-time}, and adding up the evolution times gives the stated duration bound.
\end{proof}

For $M,M'\geq 2$, the choices
\begin{equation}\label{eq:time-choice}
  T=\frac{16M^2}{\pi\delta_M(\lambda)},\qquad
  T'=\frac{16(M')^2}{\pi\delta_{M'}(\eta_Q)}
\end{equation}
give the simpler bound
\[
  \|W-V\|\leq\Big\lceil\frac\ell2\Big\rceil\frac{\pi^2}{\sqrt2}
  \left(\frac{\sqrt d}{M}+\frac{\sqrt{|Q|}}{M'}\right).
\]
If the entries of $\lambda$ and those of $\eta_Q$ are each linearly independent over $\Q$, then arbitrarily large $M,M'$ are allowed. Hence, any prescribed accuracy can be reached with $\ell$ factors and the explicit time bounds in \Cref{eq:time-choice}. Note that these are uniform in the target $W$. This refines the density statement in \Cref{prop:NL} by controlling the evolution times as well as the number of switches.

\section*{Acknowledgments}
MMW acknowledges support from the German-Israeli Foundation for Scientific Research and Development (GIF) under grant no. I-3014-304.6/2026 and from the 
 Isaac Newton Institute for Mathematical Sciences, Cambridge (supported by EPSRC grant no. EP/Z000580/1) for  hospitality during the programme \emph{Mathematics of many-body entanglement} where work on this paper was undertaken.

\bibliographystyle{halpha}
\bibliography{Alt}{}\vspace*{15pt}

@article{Mackey2003,
author = {Mackey, D. Steven and  Mackey, Niloufer and Tisseur, Françoise},
journal = {Electron. J. Linear Algebra},
pages = {106-145},
title = {Structured tools for structured matrices.},
url = {http://eudml.org/doc/123080},
volume = {10},
year = {2003},
}

@article{HouseholderUN,
  title = {Engineering of arbitrary $\mathrm{U}(N)$ transformations by quantum {Householder} reflections},
  author = {Ivanov, P. A. and Kyoseva, E. S. and Vitanov, N. V.},
  journal = {Phys. Rev. A},
  volume = {74},
  issue = {2},
  pages = {022323},
  numpages = {8},
  year = {2006},
  publisher = {American Physical Society},
  doi = {10.1103/PhysRevA.74.022323},
  url = {https://link.aps.org/doi/10.1103/PhysRevA.74.022323}
}

@article{Duniversality,
  title = {Criteria for exact qudit universality},
  author = {Brennen, Gavin K. and O'Leary, Dianne P. and Bullock, Stephen S.},
  journal = {Phys. Rev. A},
  volume = {71},
  issue = {5},
  pages = {052318},
  numpages = {7},
  year = {2005},
  publisher = {American Physical Society},
  doi = {10.1103/PhysRevA.71.052318}
}

@book{multivariatestat,
publisher = {John Wiley and Sons, Ltd},
title = {Aspects of Multivariate Statistical Theory},
author= {Robb J. Muirhead},
year = {1982}
}

@article{Mastrodonato2007,
author={Mastrodonato, Christian
and Tumulka, Roderich},
title={Elementary Proof for Asymptotics of Large {Haar}-Distributed Unitary Matrices},
journal={Letters in Mathematical Physics},
year={2007},
volume={82},
number={1},
pages={51-59}
}

@misc{LloydEfficientImplementation2019,
	title={Efficient implementation of unitary transformations}, 
	author={Seth Lloyd and Reevu Maity},
	year={2019},note={arXiv:1901.03431},
	archivePrefix={arXiv},
	primaryClass={quant-ph},
	url={https://arxiv.org/abs/1901.03431}, 
}

@article{Hamada2014,
    author = {Hamada, Mitsuru},
    title = {The minimum number of rotations about two axes for constructing an arbitrarily fixed rotation},
    journal = {Royal Society Open Science},
    volume = {1},
    number = {3},
    pages = {140145},
    year = {2014}
}

@article{DAlessandro2002,
	title = {Uniform finite generation of compact {Lie} groups},
	journal = {Systems \& Control Letters},
	volume = {47},
	number = {1},
	pages = {87-90},
	year = {2002},
	author = {Domenico D'Alessandro}
}

@phdthesis{Lutz2026,
  TITLE = {{Contr{\^o}le en temps minimal : applications {\`a} l'informatique quantique et {\`a} la r{\'e}sonance param{\'e}trique}},
  AUTHOR = {Lutz, Killian},
  URL = {https://theses.hal.science/tel-05699839},
  NOTE = {AEX-MARCQ},
  SCHOOL = {{Strasbourg University}},
  YEAR = {2026},
  TYPE = {Theses},
  HAL_ID = {tel-05699839},
  HAL_VERSION = {v1},
}

@article{JiaCapel2024,
   title={A generic quantum {Wielandt's} inequality},
   volume={8},
   journal={Quantum},
   publisher={Verein zur Forderung des Open Access Publizierens in den Quantenwissenschaften},
   author={Jia, Yifan and Capel, Angela},
   year={2024}, pages={1331} 
}

@article{huhtanen2015factoring,
  title={Factoring matrices into the product of circulant and diagonal matrices},
  author={Huhtanen, Marko and Per{\"a}m{\"a}ki, Allan},
  journal={Journal of Fourier Analysis and Applications},
  volume={21},
  number={5},
  pages={1018--1033},
  year={2015},
  publisher={Springer}
}

@article{LLM21, author={L{\'o}pez-Pastor, V{\'i}ctor and  Lundeen, J. S. and Marquardt, F.},
  title={Arbitrary optical wave evolution with {F}ourier transforms and phase masks},
  journal={Opt. Express}, volume={29}, number={23}, pages={38441--38450}, year={2021}}

@article{girouard2025near,
  title={Near-optimal decomposition of unitary matrices using phase masks and the discrete {Fourier} transform},
  author={Girouard, Vincent and Quesada, Nicol{\'a}s},
  journal={Journal of the Optical Society of America B},
  volume={43},
  number={3},
  pages={A66--A73},
  year={2025},
  publisher={Optica Publishing Group}
}

@article{lloyd1995almost,
  title={Almost any quantum logic gate is universal},
  author={Lloyd, Seth},
  journal={Phys. Rev. Lett.},
  volume={75},
  number={2},
  pages={346},
  year={1995},
  publisher={APS}
}

@article{Llo96, author={S. Lloyd}, title={Universal quantum simulators},
  journal={Science}, volume={273}, pages={1073--1078}, year={1996}}

@article{idel2015sinkhorn,
  title={Sinkhorn normal form for unitary matrices},
  author={Idel, Martin and Wolf, Michael M},
  journal={Linear Algebra and its Applications},
  volume={471},
  pages={76--84},
  year={2015},
  publisher={Elsevier}
}

@article{CAB+21, author={M. Cerezo and A. Arrasmith and R. Babbush and others},
  title={Variational quantum algorithms}, journal={Nat. Rev. Phys.}, volume={3},
  pages={625--644}, year={2021}}

@misc{FGG14, author={E. Farhi and J. Goldstone and S. Gutmann},
  title={A quantum approximate optimization algorithm}, note={arXiv:1411.4028}, year={2014}}

@misc{Llo18, author={S. Lloyd}, title={Quantum approximate optimization is computationally universal},
  note={arXiv:1812.11075}, year={2018}}

@article{MBZ20, author={M. E. S. Morales and J. D. Biamonte and Z. Zimbor\'as},
  title={On the universality of the quantum approximate optimization algorithm},
  journal={Quantum Inf. Process.}, volume={19}, pages={291}, year={2020}}

@article{Low71, author={F. Lowenthal}, title={Uniform finite generation of the rotation group},
  journal={Rocky Mountain J. Math.}, volume={1}, number={4}, pages={575--586}, year={1971}}

@article{Low72, author={F. Lowenthal}, title={Uniform finite generation of {SU}(2) and {SL}(2,{R})},
  journal={Canad. J. Math.}, volume={24}, number={4}, pages={713--727}, year={1972}}

@book{HardyWright, author={G. H. Hardy and E. M. Wright},
  title={An Introduction to the Theory of Numbers}, edition={4th},
  publisher={Oxford University Press}, year={1975}}

@article{GM16, author={S. M. Gonek and H. L. Montgomery},
  title={Kronecker's approximation theorem}, journal={Indag. Math. (N.S.)},
  volume={27}, number={2}, pages={506--523}, year={2016}}

@article{Che00, author={Y.-G. Chen}, title={The best quantitative {K}ronecker's theorem},
  journal={J. London Math. Soc. (2)}, volume={61}, number={3}, pages={691--705}, year={2000}}

@Article{Fischer1998,
author={Fischer, Alexander},
title={Approximation of almost periodic functions by periodic ones},
journal={Czechoslovak Mathematical Journal},
year={1998},
volume={48},
number={2},
pages={193-205}
}

\end{document}